\definecolor{cornellred}{RGB}{196,18,48}
\definecolor{dartmouthgreen}{RGB}{0,112,60}
\definecolor{cisorange}{RGB}{247,147,33}
\definecolor{cisblue}{RGB}{67,199,244}
\newcommand{\cat}{%
\mathbf%
}
\newcommand{\sem}[1]{
  \left\llbracket #1 \right\rrbracket
}
\newcommand{\bsem}[1]{
  \llparenthesis #1 \rrparenthesis
}
\newcommand{\lettext}[0]{\text{let }}
\newcommand{\lift}[0]{\mathsf{lift}}
\newcommand{\iter}[0]{\text{it }}
\newcommand{\D}{\mathcal{D}}
\newcommand{\dint}{\mathop{}\! \mathrm{d}}
\theoremstyle{plain}
\newtheorem{Th}{Theorem}[section]
\newtheorem{Lemma}[Th]{Lemma}
\theoremstyle{definition}
\newtheorem{Def}[Th]{Definition}
\newtheorem{Rem}[Th]{Remark}
\newtheorem{?}[Th]{Problem}
\newtheorem{Ex}[Th]{Example}
\newcommand{\nat}{\mathbb{N}}
\newcommand{\R}{\mathbb{R}}
\newcommand{\B}{\mathbb{B}}
\newcommand{\test}[1]{\mathcal D(#1)}
\newcommand{\dist}[1]{\mathcal D'(#1)}
\newcommand{\diff}[1]{(#1, \mathcal P^U_{#1})}
\newcommand{\ifthen}[3]{\mathsf{if }\, #1\, \mathsf{ then }\, #2\, \mathsf{ else }\, #3}
\newcommand{\lamb}[2]{\lambda #1.\, #2}
\newcommand{\app}[2]{#1\, #2}
\newcommand{\letin}[3]{\mathsf{let}\, #1 \, = \, #2 \, \mathsf{in} \, #3}
\newcommand{\diffcat}{\cat{Diff}}
\newcommand{\set}[2]{\{#1 \, | \, #2 \}}
\newcommand{\lrang}[1]{ \langle #1 \rangle }
\begin{document}

\title{Distribution Theoretic Semantics for Non-Smooth Differentiable Programming}         


\author{Pedro H. Azevedo de Amorim}
\affiliation{
  \department{Computer Science}              
  \institution{Cornell University}            
  \city{Ithaca}
  \state{New York}
  \postcode{14850}
  \country{United States}                    
}
\email{pamorim@cs.cornell.edu}          

\author{Christopher Lam}
\affiliation{
  \department{Computer Science}             
  \institution{University of Illinois at Urbana-Champaign}           
  \city{Champaign}
  \state{Illinois}
  \postcode{61802}
  \country{United States}                   
}
\email{lam30@illinois.edu}         

\begin{abstract}
  With the wide spread of deep learning and gradient descent inspired optimization algorithms, differentiable programming has gained traction. Nowadays it has found applications in many different areas as well, such as scientific computing, robotics, computer graphics and others. One of its notoriously difficult problems consists in interpreting programs that are not differentiable everywhere.
  
  In this work we define $\lambda_\delta$, a core calculus for non-smooth differentiable programs and define its semantics using concepts from distribution theory, a well-established area of functional analysis. We also show how $\lambda_\delta$ presents better equational properties than other existing semantics and use our semantics to reason about a simplified ray tracing algorithm. Further, we relate our semantics to existing differentiable languages by providing translations to and from other existing differentiable semantic models. Finally, we provide a proof-of-concept implementation in PyTorch of the novel constructions in this paper.
\end{abstract}

\begin{CCSXML}
<ccs2012>
<concept>
<concept_id>10011007.10011006.10011008</concept_id>
<concept_desc>Software and its engineering~General programming languages</concept_desc>
<concept_significance>500</concept_significance>
</concept>
<concept>
<concept_id>10003456.10003457.10003521.10003525</concept_id>
<concept_desc>Social and professional topics~History of programming languages</concept_desc>
<concept_significance>300</concept_significance>
</concept>
</ccs2012>
\end{CCSXML}

\ccsdesc[500]{Software and its engineering~General programming languages}
\ccsdesc[300]{Social and professional topics~History of programming languages}

\keywords{Differentiable Programming, Distribution Theory, Denotational Semantics}  

\maketitle

\section{Introduction}
The field of differentiable programming languages seeks to add differentiation operators, such as gradients, to programming languages. Originally motivated by its use in machine learning frameworks such as Tensorflow \cite{tensorflow}, these techniques have seen a considerable rise in interest in the past years.

In the context of neural networks, differentiable programming is used to easily implement gradient descent algorithms. It can quickly find the local minimum of a carefully selected objective function over a large data set, yet can be succinctly represented by the following two line program:
\begin{align*}
    w_0 = &\ 0\\
    w_{n + 1} = &\ w_n - \gamma \nabla F(w_n)
\end{align*}

Many languages can already easily implement the program above for a broad class of objective functions $F$, but in doing so they have outpaced the theoretical understanding of differentiable programming. This is an expected consequence, as the gradient of a function is only defined for points in which the function is differentiable. Unfortunately, it is very common for real-world applications to make use of non-smooth functions. The ReLU function depicted in Figure~\ref{fig:relu}, for instance is a non-smooth function widely used to train neural networks.


The example above illustrates how the practice of differentiable programming has outpaced its theory. Much has been done in understanding syntactic aspects of differentiation in programming languages; automatic differentiation (AD) techniques have seen steady progress since the 80s \cite{griewank1989automatic,pearlmutter2008,beck1994if}. On the denotational side, however, the programming languages community lags behind in handling features that users of these languages take for granted.

This disparity has practical and negative consequences as many automatic differentiation algorithms present unsound behavior that breaks the equational theory of the language when the program being differentiated uses certain seemingly harmless features. 

One of these features is conditionals, which can easily create points of non-differentiability as the ReLU activation function shows:

    $$\verb_ ReLU(x) = if x < 0 then 0 else x_$$
    
    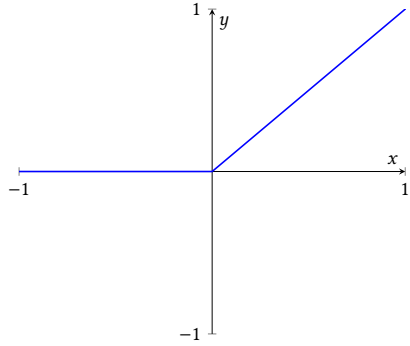
\begin{figure}
      \centering
      \scalebox{.75}{
        \begin{tikzpicture}[
          declare function={
            func(\x) = (x <= 0)*(0) +
            (\x > 0) * x
            ;
          }
          ]
          \begin{axis}[
            axis x line=middle, axis y line=middle,
            ymin=-1, ymax=1, ytick={-1,...,1}, ylabel=$y$,
            xmin=-1, xmax=1, xtick={-1,...,1}, xlabel=$x$,
            domain=-1.5:1.5,samples=101, 
            ]
            \addplot [blue,thick] {func(x)};
          \end{axis}
        \end{tikzpicture}
      }
      \caption{ReLU activation function}
      \label{fig:relu}
\end{figure}

One way of fixing this is by syntactically restricting which programs can be differentiated. This is an insufficient solution, as many common applications heavily rely on these use-cases. 

Many attempts have been made to deal with non-smooth differentiability, but they all either present certain non-intuitive restrictions on which non-smooth programs can be written or break the equational theory of the language. Concretely, many of the expected equations that if-statements are supposed to satisfy are not validated by these semantics, such as:

\begin{Th}
  \label{th:intro}
   $\sem{\mathsf{if }\, b\, \mathsf{ then \,  } t \, \mathsf{ else }\, t} = \sem {t}$.
\end{Th}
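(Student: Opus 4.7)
My plan is to prove the equation by unfolding the distribution-theoretic semantic clause for the if-then-else construct and then applying it to the special case where both branches coincide. I expect the semantic rule to have the general shape
$$\sem{\ifthen{b}{t_1}{t_2}} = [\sem{b} > 0] \cdot \sem{t_1} + [\sem{b} \leq 0] \cdot \sem{t_2},$$
interpreted in the ambient space of distributions (or distribution-valued functions of the free variables). Setting $t_1 = t_2 = t$, the two indicator-like weights sum to the constant $1$ almost everywhere, so by linearity of the scaling action on distributions the right-hand side collapses to $\sem{t}$.

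The main obstacle, and the very reason this equation reportedly fails in alternative semantics of non-smooth differentiable programs, is the possible boundary contribution supported on the locus $\{\sem{b} = 0\}$. A semantics designed so that derivatives of conditionals behave correctly will naturally include a delta-like singular term there, since this is precisely the mechanism by which distributional derivatives capture jumps. The crucial step of the proof is therefore to verify that in $\lambda_\delta$ any such boundary term always appears with a coefficient proportional to the jump $\sem{t_1} - \sem{t_2}$ between the two branches. Once that structural fact is in place, substituting $t_1 = t_2 = t$ annihilates this coefficient, eliminating the singular contribution and leaving only the bulk term analyzed above.

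I would close by noting that this cancellation is the conceptual payoff of the distributional approach: because the boundary correction depends linearly on the difference of the branches, it must vanish exactly when the branches coincide, which is precisely what makes the equation hold. Semantics that instead add a global smoothing or that compute the jump through some nonlinear post-processing do not enjoy this property, which is why the equation breaks in those settings even though it is syntactically obvious.
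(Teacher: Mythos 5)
Your first paragraph is essentially the paper's proof. In $\lambda_\delta$ the conditional is not a primitive but is encoded as $\mathbb{1}_b\, t_1 \ +.\ \mathbb{1}_{\neg b}\, t_2$, whose denotation is the sum of lifted distributions $T_{\mathbb{1}_{\sem{b}}\sem{t_1}} + T_{\mathbb{1}_{\neg \sem{b}}\sem{t_2}}$; taking $t_1 = t_2 = t$, the indicators sum to the constant $1$ (here exactly, since $\neg b$ is the complement in the Boolean algebra $\mathsf{Pred}(\R^n)$, though almost-everywhere equality of the integrands would already force equality of the lifted distributions), and linearity of $f \mapsto T_f$ gives $T_{\sem{t}} = \sem{\lift(t)}$. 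That is the whole content of the paper's Theorem~\ref{th:main}, whose proof is exactly this two-step chain of equalities.

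Where your proposal goes astray is in the second and third paragraphs. The semantic clause for the conditional contains no singular term supported on the locus $\{\sem{b}=0\}$, so there is no ``boundary contribution'' whose coefficient you need to check is proportional to $\sem{t_1}-\sem{t_2}$. The delta-like terms you have in mind are not part of the definition of the conditional; they appear only after the differentiation operator $\frac{\partial}{\partial x_i}$ is applied, and then automatically, because the distributional derivative of $T_f$ for a discontinuous $f$ produces them via integration by parts against the test function. The conditional itself denotes nothing more than the lift of an ordinary locally integrable function. Your ``crucial step'' is therefore vacuous, and the proof reduces to the one-line linearity argument of your first paragraph. (It is true that the jump term in the \emph{derivative} of a piecewise differentiable function is scaled by the magnitude of the jump --- this is the piecewise-differentiation theorem in the paper's appendix, and it does explain conceptually why coinciding branches cause no singular artifacts downstream --- but that fact plays no role in establishing the equation itself.)
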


Our work uses ideas from distribution theory ---  a subject relevant to functional analysis, partial differential equations and mathematical physics --- to give a novel approach to non-smooth differentiability where conditionals have a more intuitive equational theory. Our semantics is the first one that validates Theorem~\ref{th:intro}\footnote{In $\lambda_\delta$ syntax the theorem statement looks a bit different, see Section~\ref{ref:prac} for more details.} while still allowing for a large set of boolean predicates that can be conditioned on. Furthermore, as we will show in Section~\ref{sec:examples}, the existing semantics for non-smooth differentiation behave unsoundly when you have the interaction of integration and differentiation. These problematic interactions are both common in fields such as graphics, robotics, and physical simulators, and also completely mathematically sound in our semantics. 

In this work we present $\lambda_\delta$, a simply-typed $\lambda$-calculus extended with constructions motivated by distribution theory, define its denotational semantics using diffeological spaces (see \ref{sec:diffspace}), and implement a proof-of-concept library in PyTorch.

We chose to use diffeological spaces because they are well-understood mathematical objects that can accommodate higher-order programming, differentiation and distribution-theoretic ideas. However, we believe that there might be other interesting semantical models for $\lambda_\delta$, specially when taking into account computability issues; see Section~\ref{ref:prac} for more details.

This paper assumes significant background in distribution theory, and uses the standard notation from this field. See Appendix~\ref{sec:background} for a self-contained introduction on this topic. We heavily recommend reading this appendix for those that are unfamiliar with this material, as we will be making heavy use of it in our contributions. \\

\noindent\textbf{Summary of contributions.} $\lambda_\delta$ is the first language that combines conditionals, higher-order functions, differentiability, and has a denotational semantics. Our main contributions are as following:

\begin{itemize}
\item We define $\lambda_\delta$, an extension of the simply-typed $\lambda$-calculus with differentiation and distribution-theoretic primitives.(\S \ref{sec:syntax})
\item We use the category $\diffcat$ to interpret its denotational semantics and prove Theorem~\ref{th:main}.(\S \ref{sec:sem})
\item We use our semantics to reason about several non-smooth programs, such as a modern differentiable ray tracing algorithm. (\S \ref{sec:validray})
\item We embed a smooth $\lambda$-calculus of \cite{huot2020} in $\lambda_\delta$ and show how to relate their AD transform with our syntax and semantics. (\S \ref{sec:embed})
\item We propose a constructive semantics for $\lambda_\delta$ using ideas from constructive topology\cite{sherman2019,sherman2020}.(\S \ref{sec:embed})
\item We have implemented a proof-of-concept library in PyTorch which explores the ideas presented in this work. (\S \ref{sec:impl})
\end{itemize}

\section{Mathematical Preliminaries: Diffeological Spaces}
\label{sec:diffspace}

When dealing with differentiable programming it is expected that the category in which we interpret our language has a canonical notion of smoothness. A well-known such category is $\cat{Man}$, the category of manifolds and smooth maps. Even though manifolds are extremely well-understood at this point, $\cat{Man}$ cannot be used to interpret a smooth $\lambda$-calculus, as it is not cartesian closed, meaning that it cannot interpret higher-order programs.

This limitation was already noted by mathematicians which led to the development of diffeological spaces. \footnote{c.f. the Diffeology textbook \cite{diffeology} for a presentation on the subject} Though its definition might seem foreign to someone who is only used to working with smooth manifolds, diffeological spaces are deeply related to them, as they can be defined as a certain category of sheaves over $\cat{Man}$.

The soon to be defined category $\cat{Diff}$ has many desired categorical properties such as completeness, cocompleteness and cartesian closure; indeed, it is even a \emph{quasitopos}.

\begin{Def}
  A diffeological space is a pair $\diff X$, where $X$ is a set and for every natural number $n$ and open set $U \subseteq \R^n$, a set of plots $\mathcal{P}^U_X \subseteq U \to X$ such that.
  \begin{itemize}
  \item Every constant function is a plot
  \item If $f : V \to U$ is a smooth function and $p \in \mathcal P^U_X$ then $p \circ f \in \mathcal P^V_X$
  \item Let $f : U \to X$ be a function, if for every point $u \in U$ there exists an open set $V \subseteq U$ such that $f|_{V} \in \mathcal P^V_X$ then $f \in \mathcal P^U_X$ 
  \end{itemize}
\end{Def}

We say that a function $f : \diff X \to \diff Y$ is a $\cat{Diff}$ morphism if for every plot $p : U \to X \in \mathcal P^U_X$, $f \circ p \in \mathcal P^U_Y$. We call the set $\mathcal P^U_X$ a diffeology.

\begin{Def}
  The category $\cat{Diff}$ has diffeological spaces as morphisms and $\cat{Diff}$ morphisms as arrows.
\end{Def}

Note that this construction is very similar to the quasi Borel category for higher-order probability theory defined by Heunen et al. \cite{qbs}. Unsurprisingly, they both arise from similar categorical machinery.

Let us work through a couple of examples to get a better feel of how to work with diffeological spaces.

\begin{Ex}
  The pair $(\R^n, \set {f : U \to \R^n} {f \text{ is smooth})})$ is a diffeological space for every $n > 0$.
\end{Ex}

The example above shows that the common definition of smoothness is used when defining the $\R$ diffeological space. This construction can be generalized to an arbitrary manifold.

\begin{Ex}
  Let $M$ be a manifold, then $(M, \set {f : U \to M} {f \in \cat{Man}(U, M)})$ is a diffeological space. Every smooth function $f : M \to N$ between manifolds is also a $\cat{Diff}$ morphism. This construction is actually a functor $\iota : \cat{Man} \to \cat{Diff}$.
\end{Ex}

\begin{Lemma}[\cite{diffeology}]
  The functor $\iota$ is full and faithful.
\end{Lemma}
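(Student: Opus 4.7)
The plan is to unpack what $\iota$ does to morphisms and reduce both claims to standard facts about manifolds. Recall that $\iota$ equips a manifold $M$ with the diffeology whose plots $\mathcal{P}^U_M$ consist of all smooth maps $U \to M$ in $\cat{Man}$, and sends a smooth map $f : M \to N$ to the same underlying set-function, now viewed as a $\cat{Diff}$-morphism. Faithfulness and fullness together amount to saying that for smooth manifolds $M, N$, the set of smooth maps $M \to N$ in $\cat{Man}$ equals the set of $\cat{Diff}$-morphisms $\iota(M) \to \iota(N)$.

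Faithfulness is essentially definitional: if $f, g : M \to N$ are smooth and $\iota(f) = \iota(g)$, then since $\iota$ does not alter the underlying function on points, $f$ and $g$ agree pointwise and are therefore equal as smooth maps. No real content beyond tracking definitions is needed here.

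For fullness, I would take an arbitrary $\cat{Diff}$-morphism $f : \iota(M) \to \iota(N)$ and show that $f$ is smooth in the manifold sense. Fix a point $x \in M$ and pick a chart $\phi : U \to M$ with $U \subseteq \R^n$ open and $x \in \phi(U)$. Since $\phi$ is smooth as a map of manifolds, it lies in $\mathcal{P}^U_M$, so by the assumption that $f$ is a $\cat{Diff}$-morphism, $f \circ \phi : U \to N$ lies in $\mathcal{P}^U_N$, i.e.\ is smooth in the manifold sense. Composing further with a chart on $N$ shows that the local coordinate representatives of $f$ are smooth, and since smoothness of a map between manifolds is a local property, $f$ is globally smooth. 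Hence $f = \iota(\tilde{f})$ for the smooth map $\tilde{f}$ defined by the same underlying function.

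The only mildly subtle point is making sure that the manifold-diffeology plots really do include enough test maps to detect smoothness, which is precisely why the definition takes \emph{all} smooth $U \to M$ rather than, say, only charts; once this is in hand, the proof is essentially a one-line application of the universal role of charts. I anticipate no serious obstacle beyond carefully pinning down this correspondence between plots and chart-level smoothness.
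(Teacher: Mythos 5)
Your argument is correct and is essentially the standard proof of this fact (the paper itself gives no proof, merely citing the diffeology textbook, where the same chart-based argument appears): faithfulness is definitional because $\iota$ is the identity on underlying functions, and fullness follows by precomposing with local parametrizations, which are plots, and using locality of smoothness. The only nit is terminological: what you call a ``chart $\phi : U \to M$'' is really a parametrization (the inverse of a chart), but since these are exactly the smooth injections from open subsets of $\R^n$ that detect manifold smoothness, the argument goes through unchanged.
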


What the theorem above implies is that when you only have ground types the smooth functions are exactly what you would expect them to be. In particular $\cat{Man}(\R^n, \R^m) = \cat{Diff}(\R^n, \R^m)$. This is what makes $\cat{Diff}$ such a nice category to interpret differentiable programs --- it simply conservatively extends the familiar category $\cat{Man}$.

\begin{Ex}
  The pair $\diff 1$ is a diffeological space where $\mathcal{P}^U_1$ is the singleton set for every open $U$.
\end{Ex}

\begin{Lemma}
\label{lem:const}
  Let $X$ be a set, the set $\set{f : U \to \R}{f \text{ is constant}}$
  is a diffeology.
\end{Lemma}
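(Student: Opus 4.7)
The plan is to verify the three diffeology axioms on the proposed family of plots. A point to flag first: as written the statement has $f : U \to \R$ where it almost certainly means $f : U \to X$, and the word "constant" must be read as "locally constant", i.e.\ for each point in $U$ there is a neighborhood on which $f$ takes a single value in $X$. Otherwise the third axiom fails on any disconnected $U$ (e.g.\ on $U = (0,1) \cup (2,3)$ the function equal to $0$ on the first component and $1$ on the second is locally in the family but not globally constant). Under the locally-constant reading, the family is exactly the \emph{discrete diffeology} on $X$, and this is what I would prove to be a diffeology.

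For the first axiom, any (globally) constant function is trivially locally constant, since we may take $V = U$ as the witnessing neighborhood at every point. For the second axiom, suppose $p : U \to X$ is locally constant and $f : V \to U$ is smooth, hence in particular continuous. Given $v \in V$, pick an open $W \subseteq U$ containing $f(v)$ on which $p$ is constant; then $f^{-1}(W) \subseteq V$ is an open neighborhood of $v$ on which $p \circ f$ is constant, so $p \circ f$ is locally constant and thus lies in $\mathcal{P}^V_X$.

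The third axiom amounts to saying that being locally locally constant is the same as being locally constant. Concretely, suppose $f : U \to X$ is such that every $u \in U$ has an open neighborhood $V_u \subseteq U$ with $f|_{V_u}$ locally constant. Then, in particular, there is an open $W_u \subseteq V_u$ containing $u$ on which $f|_{V_u}$ (and hence $f$) is constant, witnessing local constancy of $f$ at $u$. This step is the only one that requires any attention; it is also where the distinction between "constant" and "locally constant" matters, which is the main obstacle to a clean statement rather than a mathematical obstacle to the proof itself.
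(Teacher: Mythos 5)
Your verification is correct, and it is considerably more careful than the paper's own proof, which consists entirely of the sentence ``the proof follows by unfolding the definition of diffeology.'' When you actually do that unfolding, you hit exactly the snag you describe: the family of \emph{globally} constant parametrizations satisfies the first two axioms but fails the locality axiom on any disconnected domain (your example $U = (0,1) \cup (2,3)$ is the standard counterexample), so the statement as written is false unless ``constant'' is read as ``locally constant.'' Under that reading the family is the discrete (fine) diffeology of the standard literature, and your three-axiom check --- including the observation that ``locally locally constant'' collapses to ``locally constant'' --- is exactly the argument the paper is implicitly appealing to. You are also right that the codomain should be $X$ rather than $\R$, since the lemma is quantified over an arbitrary set $X$ and is later applied to $\mathsf{Pred}(\R^n)$. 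None of this damages the downstream uses in the paper (Lemma~\ref{lem:constmor} and the diffeology on $\mathsf{Pred}(\R^n)$ go through verbatim with locally constant plots), but your proposal identifies a genuine imprecision that the paper's one-line proof papers over, and the repair you give is the right one.
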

\begin{proof}
     The proof follows by unfolding the definition of diffeology. This is called the \emph{constant} diffeology.
\end{proof}

\begin{lemma}
    \label{lem:constmor}
    Let $X$ and $Y$ be diffeological spaces equipped with the constant diffeology. Every function $f : X \to Y$ is a $\cat{Diff}$ morphism.
\end{lemma}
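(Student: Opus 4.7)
The plan is to unfold definitions. To show $f : X \to Y$ is a $\cat{Diff}$ morphism, I need to verify that for every open $U \subseteq \R^n$ and every plot $p \in \mathcal{P}^U_X$, the composition $f \circ p$ lies in $\mathcal{P}^U_Y$. Since $X$ carries the constant diffeology (as introduced in Lemma~\ref{lem:const}), an arbitrary plot $p : U \to X$ is a constant function, say $p(u) = x_0$ for all $u \in U$.

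Then $f \circ p : U \to Y$ sends every $u \in U$ to the fixed value $f(x_0)$, so it is itself a constant function. Because $Y$ also carries the constant diffeology, $f \circ p$ belongs to $\mathcal{P}^U_Y$ by definition. This holds for arbitrary $U$ and arbitrary $p$, so $f$ is a $\cat{Diff}$ morphism. There is no real obstacle here: the proof is a one-line unfolding of definitions, and the only thing being used about $f$ is that composing any function with a constant function yields a constant function, which requires no smoothness or regularity assumption on $f$ whatsoever.
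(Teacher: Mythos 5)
Your proof is correct and matches the paper's argument exactly: the paper's one-line proof is precisely the observation that $f \circ p$ is constant for every plot $p : U \to X$, which you have simply spelled out in more detail. Nothing further is needed.
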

\begin{proof}
    For every plot $p : U \to X$, the function $f \circ p$ is constant.
\end{proof}

\subsubsection{Cartesian Closed structure}

As it was mentioned above, in order to interpret higher-order programs we need $\cat{Diff}$ to be cartesian closed. 

\begin{Def}[Products]
  Let $\diff {X_1}$ and $\diff {X_2}$ be two diffeological spaces, then $\diff {X_1 \times X_2}$ is a cartesian product in $\cat{Diff}$, where $\mathcal P^U_{X_1\times X_2} = \set{p : U \to X_1 \times X_2}{p \circ \pi_i \in \mathcal{P}^U_{X_i}, i \in \{ 1, 2 \}}$
\end{Def}

\begin{Def}[Closure]
  Let $\diff X$ and $\diff Y$ be two diffeological spaces, then $\diff {X \Rightarrow Y}$ is an internal hom in $\cat{Diff}$, where $\mathcal P^U_{X\Rightarrow Y} = \set{p : U \to \cat{Diff}(X, Y)}{ (u, x) \mapsto p(u)(x) \in \cat{Diff}(U \times X, Y)}$
\end{Def}

These constructions will be used in Section~\ref{sec:sem} to interpret product types and function types.

\subsubsection{Distribution theory in $\cat{Diff}$}



There are other cartesian closed categories that also have a notion of differentiability. However, we are working with $\cat{Diff}$ because it accommodates the distribution theoretic machinery we need in order to define our semantics. The following lemma is used to define distribution objects inside $\cat{Diff}$.

\begin{Lemma}[\cite{diffeology}]
  Let $\diff X$ be a diffeological space and $Y \subseteq X$. The pair $(Y, \mathcal{P}^U_{X}|_Y)$ is a diffeological space, where $\mathcal{P}^U_{X}|_Y = \set{f : U \to X \in \mathcal{P}^U_{X}}{f(U) \subseteq Y}$.
\end{Lemma}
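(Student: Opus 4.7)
The plan is to verify the three defining axioms of a diffeological space for the proposed set $\mathcal{P}^U_X|_Y$, each time exploiting that the same axiom already holds for $(X, \mathcal{P}^U_X)$ and that the extra condition $f(U) \subseteq Y$ is preserved by the operations involved.

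First I would handle constants: any constant map $f : U \to X$ with value $y \in Y$ lies in $\mathcal{P}^U_X$ by the first axiom applied to $(X, \mathcal{P}^U_X)$, and its image $\{y\}$ is obviously contained in $Y$, so $f \in \mathcal{P}^U_X|_Y$. Next I would check closure under smooth reparametrisation: given $p \in \mathcal{P}^U_X|_Y$ and a smooth $g : V \to U$, the composite $p \circ g$ lies in $\mathcal{P}^V_X$ by the second axiom for $X$, and $(p \circ g)(V) \subseteq p(U) \subseteq Y$, hence $p \circ g \in \mathcal{P}^V_X|_Y$.

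The locality axiom requires slightly more care but is still routine. Suppose $f : U \to X$ has the property that each $u \in U$ admits an open neighborhood $V_u \subseteq U$ with $f|_{V_u} \in \mathcal{P}^{V_u}_X|_Y$. By definition each restriction is in particular a plot of $X$, so the locality axiom for $(X, \mathcal{P}^U_X)$ yields $f \in \mathcal{P}^U_X$. For the image condition, pick any $u \in U$; then $u \in V_u$ and $f(u) = f|_{V_u}(u) \in f|_{V_u}(V_u) \subseteq Y$, so $f(U) \subseteq Y$ and $f \in \mathcal{P}^U_X|_Y$.

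I do not expect any real obstacle here; the lemma is essentially the observation that the condition ``$f$ factors through $Y$'' is stable under restriction, precomposition with smooth maps, and local gluing, which is immediate pointwise. The only mild subtlety is notational: although the elements of $\mathcal{P}^U_X|_Y$ are presented as maps into $X$, one should tacitly identify them with maps into $Y$ so that the resulting structure really is a diffeology on the set $Y$; this identification is unambiguous precisely because of the image constraint $f(U) \subseteq Y$.
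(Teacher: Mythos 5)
Your proof is correct: the paper itself offers no proof of this lemma (it simply cites the diffeology textbook), and your argument is the standard one --- verify each of the three axioms for $\mathcal{P}^U_X|_Y$, noting in each case that the image condition $f(U) \subseteq Y$ is preserved by constants, by precomposition with smooth maps, and by local gluing. Your closing remark about tacitly identifying maps $U \to X$ with image in $Y$ with maps $U \to Y$ is exactly the right point of care and is how the subspace diffeology is meant to be read.
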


For the categorically minded reader: since $\cat{Diff}$ is a quasitopos, the theorem above can be generalized to arbitrary strong monomorphims.

This theorem makes it easy to equip $\test {\R^n}$ and $\dist{\R^n}$ with diffeologies: they are simply the appropriate subobjects of $\R^n \Rightarrow \R$ and $\test{\R^n} \Rightarrow \R$, respectively. Concretely, the plots in $\test{\R^n}$ are the functions $p : U \to \test{\R^n}$ such that $p$ is a plot in $\mathcal{P}^U_{\R^n \Rightarrow \R}$.

The theorem below is what allows us to lift smooth programs to distributions

\begin{Lemma}[\cite{kock2004}]
  \label{th:smooth}
  The map $\mathbf{T}: C^\infty(\R^n) \to \mathcal D'(\R^n)$ is smooth
\end{Lemma}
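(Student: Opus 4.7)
The plan is to unfold what smoothness means for $\mathbf{T}$ in the diffeological sense and then reduce the problem to a Leibniz-type differentiation-under-the-integral argument. By definition, $\mathbf{T}$ is a $\diffcat$-morphism iff for every plot $p : U \to C^\infty(\R^n)$, the composite $\mathbf{T} \circ p : U \to \dist{\R^n}$ is a plot, i.e.\ (using the subspace/internal-hom description recalled right before the lemma) the map
\[
U \times \test{\R^n} \to \R, \qquad (u, \varphi) \mapsto \int_{\R^n} p(u)(x)\,\varphi(x)\dint x
\]
is a $\diffcat$-morphism. So the first step is to fix such a plot $p$ and reduce the claim to smoothness of this pairing.

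Next I would test smoothness of the pairing against an arbitrary plot $(r,s) : V \to U \times \test{\R^n}$. Writing $f(v,x) := p(r(v))(x)$ and $g(v,x) := s(v)(x)$, both $f$ and $g$ are smooth functions $V \times \R^n \to \R$ by the defining property of the internal hom and the fact that the diffeology on $\test{\R^n}$ is inherited from $\R^n \Rightarrow \R$. The goal becomes showing that
\[
F(v) \;=\; \int_{\R^n} f(v,x)\,g(v,x)\dint x
\]
is smooth on $V$.

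The engine is the classical Leibniz rule: differentiation commutes with integration provided the integrand and its derivatives are dominated by an integrable function on a neighborhood. To invoke this I need a local uniform bound on the support of $g(v,\cdot)$, namely: for each $v_0 \in V$, a neighborhood $W$ and a compact $K \subseteq \R^n$ such that $\mathrm{supp}\,g(v,\cdot) \subseteq K$ for all $v \in W$. Granted such $K$, the integral reduces to one over $K$, $f \cdot g$ and all its partial derivatives in $v$ are continuous hence bounded on $W \times K$, and standard Leibniz yields smoothness of $F$ on $W$; since smoothness is local, this finishes the proof.

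The main obstacle is exactly the local-uniform compact support property for plots in $\test{\R^n}$. It does not follow from the naive subspace diffeology if one is careless, and its subtlety is precisely what makes the lemma nontrivial; this is the content of Kock's result \cite{kock2004}, which I would cite (or re-derive from the description of how families of test functions with smoothly varying support behave under the subspace diffeology) to justify the existence of $W$ and $K$. Everything else is the bookkeeping of the Leibniz rule and the categorical unfolding described above.
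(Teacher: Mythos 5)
The paper never actually proves this lemma --- it is imported wholesale from Kock \cite{kock2004} --- so your sketch is already more explicit than anything in the text. Your unfolding of the definitions is the right one, the Leibniz step is routine once you have the local uniform bound on supports, and you have correctly located the crux in exactly the right place.

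The problem is that the crux is not merely ``subtle'': with the diffeology this paper actually puts on $\test{\R^n}$ --- the subspace diffeology inherited from $\R^n \Rightarrow \R$, so that a plot is any jointly smooth map $V \times \R^n \to \R$ whose slices happen to be compactly supported --- the local uniform support property is false, and so is the statement you are reducing to. Concretely, let $\chi$ be a bump supported in $[0,1]$ with $\int \chi = 1$ and let $a_n$ be bumps of height $1$ supported in the disjoint intervals $(1/(n+1), 1/n)$; then $g(v,x) = \sum_n a_n(v)\,\chi(x-n)$ is jointly smooth (the sum is locally finite), each $g(v,\cdot)$ is a test function, so $v \mapsto g(v,\cdot)$ is a plot of $\test{\R}$ in the subspace diffeology, yet $\lrang{T_1, g(v,\cdot)} = \sum_n a_n(v)$ equals $1$ at points accumulating at $v=0$ while vanishing at $v=0$. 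So even $\mathbf{T}(1)$ fails to be a $\diffcat$-morphism against this diffeology, and your neighborhood $W$ and compact set $K$ simply do not exist for this plot. Consequently you cannot ``cite Kock to justify the existence of $W$ and $K$'': Kock's theorem lives in a setting where $\mathcal D(\R^n)$ carries a strictly finer structure (the one induced by the LF-colimit over the subspaces of functions supported in a fixed compact set, which builds locally uniform supports into the very definition of a plot). Your argument closes completely if you replace the stated diffeology on $\test{\R^n}$ by that one; without that replacement, the step you flag as the ``main obstacle'' is not a fact to be cited but the point where the lemma, as literally set up here, fails.
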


The next theorem is what allows us to interpret the $\delta$ distribution.

\begin{Th}
\label{th:dirac}
  The map $\delta : \R^n \to \dist{\R^n}$ is smooth.
\end{Th}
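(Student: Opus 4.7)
The plan is to unfold the definition of smoothness layer by layer using the diffeology definitions given in the excerpt (subspace, function space, product), and then reduce to the smoothness of evaluation in the cartesian closed category $\cat{Diff}$.

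First, I would unfold what it means for $\delta : \R^n \to \dist{\R^n}$ to be a $\cat{Diff}$ morphism. By definition, I must show that for every plot $p \in \mathcal{P}^U_{\R^n}$, the composite $\delta \circ p : U \to \dist{\R^n}$ lies in $\mathcal{P}^U_{\dist{\R^n}}$. Since $\dist{\R^n}$ carries the subspace diffeology inherited from $\test{\R^n} \Rightarrow \R$ (via the subobject construction in the previous lemma), this reduces to showing that $\delta \circ p$ is a plot of $\test{\R^n} \Rightarrow \R$. Finally, applying the definition of the closure diffeology, this is equivalent to showing that the uncurried map
\[
e_p : U \times \test{\R^n} \to \R, \qquad (u, \varphi) \mapsto (\delta \circ p)(u)(\varphi) = \varphi(p(u))
\]
is itself a $\cat{Diff}$ morphism.

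Next, I would exhibit $e_p$ as a composition of known smooth maps. Since $\cat{Diff}$ is cartesian closed, the evaluation map $\mathrm{ev} : (\R^n \Rightarrow \R) \times \R^n \to \R$ is a $\cat{Diff}$ morphism. Restricting the first component along the inclusion $\test{\R^n} \hookrightarrow \R^n \Rightarrow \R$ (which is a Diff morphism by the subspace diffeology) yields a smooth map $\test{\R^n} \times \R^n \to \R$. Then $e_p$ factors as
\[
U \times \test{\R^n} \xrightarrow{p \times \mathrm{id}} \R^n \times \test{\R^n} \xrightarrow{\mathrm{swap}} \test{\R^n} \times \R^n \xrightarrow{\mathrm{ev}} \R,
\]
where the first arrow uses that $p$ is smooth and products of smooth maps are smooth, the swap is clearly smooth, and the final arrow was just established. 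Composition of $\cat{Diff}$ morphisms gives smoothness of $e_p$.

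The argument is essentially routine once the right diffeologies are unpacked; the only subtlety — and the one step worth being careful about — is making sure the subspace diffeology on $\test{\R^n}$ plays well with currying, so that the uncurried evaluation really is smooth on $\test{\R^n} \times \R^n$. This follows because $\test{\R^n}$ is defined as a subobject of $\R^n \Rightarrow \R$ in the quasitopos $\cat{Diff}$, so its inclusion is a monomorphism whose plots are precisely those of the ambient function space landing in $\test{\R^n}$, and smoothness of evaluation restricts along it. With this observation the three-step factorisation above completes the proof.
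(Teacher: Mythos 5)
Your proof is correct and follows essentially the same route as the paper's: the paper simply observes that the curried map $\eta : \R^n \to (\test{\R^n} \Rightarrow \R)$, $\eta(r)(f) = f(r)$, is smooth by cartesian closure and then corestricts into the subobject $\dist{\R^n}$, which is exactly your plot-level unfolding of that argument via the evaluation map. The only detail worth adding is the (trivial) check that each $\delta_{p(u)}$ is in fact linear and continuous, so that the map really does land in $\dist{\R^n}$ before you invoke the subspace diffeology.
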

\begin{proof}
  The map $\eta : \R^n \to (\test{\R^n} \Rightarrow \R)$ such that $\eta(r, f) = f(r)$ is smooth by the Cartesian closed structure of $\cat{Diff}$ and we can conclude that $\delta$ is smooth by the fact that $\eta(r)$ is linear and, therefore, an element of $\dist{\R^n}$
\end{proof}

\begin{Rem}
It is important to note that the idea of using distribution theory to define semantics of differentiable programming languages does not rely on an specific category. We chose $\cat{Diff}$ for the sake of convenience. The category of convenient vector spaces \cite{blute2010}, for instance, also has objects for distributions. This means that the ideas presented in this paper may be used in other categories as well. 
\end{Rem}

\section{Background}
\label{sec:ifstmt}

It is not hard to see that if-statements can easily define non-differentiable behavior.
Before we explain the subtleties of conditionals in differentiable programming languages we motivate non-smooth differentiable programming languages. We argue that making it impossible to define non-smooth programs may introduce unnecessary complexity to models while allowing non-smooth behavior may also simplify certain models. 

Something that commonly occurs in systems that make use of differentiable components is having a model that, modulo its points of non-differentiability, works as intended. A common next step is to come up with a smooth variant for it, the sigmoid $S(x) = \frac{1}{1 + e^{-x}}$function being a smooth variant of the ReLU function.  Unfortunately, in more convoluted cases the smooth approximation does not have a nice closed-form expression, making the model more complex. Besides, as we will see in Section~\ref{sec:why}, in many cases, the tools used to make the model smooth implicitly use ideas from distribution theory. 

On the other hand, in physics, non-smoothness has been historically used as a simplifying agent. In rigid body physics, for instance, it is standard to assume that collisions occur instantaneously or that electrical charges are point-mass. With differentiable programming being used in physical simulators \cite{difftaichi} it seems natural that differentiable programming tools should accommodate these commonly used modeling principles.

What the existing semantics of non-smooth differentiability show us is that we cannot rely on all of our intuitions about differentiability. One of their drawbacks is that, while they prove that their semantics has the expected behavior modulo a null measure set, they break the equational theory of conditionals and add unnecessary non-terminating behavior to programs.

Our approach is more powerful in comparison; we offer a semantics that both allows for conditional statements as well as a method of differentiating those conditional statements without introducing undefined behavior. We do this by introducing distributions into our semantics and syntax. 

As differentiable programming finds applications in fields other than machine learning, such as robotics, computer vision, computer graphics and scientific computing \cite{degraverobot,LiImageProcessing,innesscicomp,PhysRevLettPhysSim}, we have empirical evidence that non-smoothness may result in better models. Recently Li et al. \cite{li2018} have shown how by using ideas from distributions theory they were able to come up with a  better differentiable ray-tracing algorithm.

Next, we will use the ReLU function to present some of the subtleties that if-statements introduce in differentiable programming languages. ReLU is differentiable almost everywhere, except at $x = 0$.

When we differentiate this function while ignoring this problematic point we get something that looks like the Heaviside function depicted in Figure~\ref{fig:heaviside}. This means that any solution that wants to accommodate non-smooth but continuous functions and higher-order derivatives must also deal with discontinuous functions. 



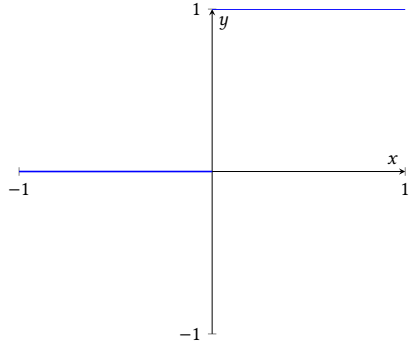
\begin{figure}
    \scalebox{.75}{
      \begin{tikzpicture}
        \begin{axis}[
          axis x line=middle, axis y line=middle,
          ymin=-1, ymax=1, ytick={-1,...,1}, ylabel=$y$,
          xmin=-1, xmax=1, xtick={-1,...,1}, xlabel=$x$,
          domain=-1.5:1.5,samples=101, 
          ]
          \addplot [domain=-1:0,blue,thick] {0};
          \addplot [domain=0:1,blue,thick] {1};
        \end{axis}
      \end{tikzpicture}
    }
    \caption{Heaviside function}
    \label{fig:heaviside}
\end{figure}

To our knowledge, there are two approaches to deal with differentiability of if-statements. The first is simply removing conditionals from the language and only allowing for the construction of infinitely differentiable functions. While this approach is formally valid, it loses the ability to construct the widely used ReLU activation function, severely limiting the expressiveness of the language. The second approach is somewhat more interesting: differentiate down both branches and leave the derivative as a piecewise function. While this approach appears appealing, it is in fact vulnerable to another family of pathological counterexamples. Consider the following function:

\begin{equation}
  \verb_ id(x) = if x != 0 then x else 0_ 
  \label{eq:id}
\end{equation}
This function is semantically identical to the the identity function over $\R$, and as such, its derivative should be equal to 1 at all points. However, the derivative resulting from the construction specified above results in the derivative at \verb_x = 0_ to be erroneously equal to 0. 


There have been forays into formalizing some of the ad-hoc solutions. Huot et al. \cite{huot2020} take the first described approach; they disallow conditionals and implement differentiation as a macro outside of the language syntax. Adabi and Plotkin \cite{abadi2019} introduce non-terminating behaviour at the points of non-smoothness and only guarantee correctness of differentiation at the points that the program terminates. Furthermore, they avoid the problematic program~\ref{eq:id} by imposing a restriction on which predicates can be expressed in their language.

\subsection{Why Distribution Theory?}
\label{sec:why}


In this section we try to give intuition as to why even though our semantics is radically different from existing semantics for differentiable programming, it is still closely related to a couple of methods that practioners currently employ to deal with non-smoothness. In fact, later in section \ref{sec:embed}, we show that we can actually embed a previous language directly into our semantics. 

The existing semantics of differentiable programming that guarantee correctness of differentiation modulo a null-measure set rely on the assumption that, if you want to run gradient descent you may either start with a random initial point or, at every iteration, add a small random noise to the current point. This procedure combined with these correctness theorems allows you to show that with probability $1$ you will always get the correct derivative.

How would you give semantics to these stochastic operations? This is achieved by integrating the almost-smooth function $f$ by the probability measure $\mu$ corresponding to the random noise, which is equal to

$$
\int f \varphi \dint x\text{, where } \varphi \text{ is the probability distribution function of } \mu
$$

This construction looks very similar to the way distribution theory would interpret this quasi-smooth function.

Another way practioners deal with non-smoothness is by finding a smooth approximation to their model. There are a few methods that allow you to define a smooth variant of a model \footnote{See  Pierucci \cite{pierucci2017} for a thorough presentation on some of these methods}. Unfortunately, in general, these methods do not give you a closed form solution. One of these methods relies on the following theorem:

\begin{theorem}
  Let $f : \R^n \to \R$ and $g : \R^n \to \R$ be two measurable functions. If $g$ is smooth and compactly supported then the convolution $(f \star g)(x) = \int_{-\infty}^{\infty} f(y) g(x - y) d y$ is defined everywhere and is smooth as well. 
\end{theorem}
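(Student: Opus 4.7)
The plan is to prove the two assertions separately: first that $(f \star g)(x)$ is defined for every $x \in \R^n$, and then that the resulting function is $C^\infty$. For both steps the crucial structural fact is that the compact support of $g$ forces the integrand $y \mapsto f(y)g(x-y)$ to vanish outside a fixed compact set, which turns otherwise delicate integrability questions into elementary ones.

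First I would fix $x \in \R^n$ and let $K = \operatorname{supp}(g)$, which by hypothesis is compact. Then $g(x - y) = 0$ whenever $x - y \notin K$, i.e.\ whenever $y \notin x - K$, and $x - K$ is itself compact. Since $g$ is smooth on a compact set it is bounded, say by $M$, and $f$ is measurable, so (assuming the intended hypothesis of local integrability, which is really what is needed for the statement to be true) the integrand is measurable and dominated by $M \cdot |f| \cdot \mathbb{1}_{x - K}$, which is integrable. Hence the Lebesgue integral defining $(f \star g)(x)$ exists and is finite.

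For smoothness I would argue by induction on the order of differentiation, with differentiation under the integral sign as the engine. Fix $x_0$ and a compact neighborhood $V$ of $x_0$. For $x \in V$, the support of $y \mapsto g(x-y)$ lies inside the fixed compact set $L = V - K$. For any multi-index $\alpha$, the partial derivative $\partial^\alpha g$ is again smooth and compactly supported (with the same support $K$), so $|\partial^\alpha_x \bigl( f(y) g(x - y)\bigr)| \le \|\partial^\alpha g\|_\infty \cdot |f(y)| \cdot \mathbb{1}_L(y)$, and the right-hand side is an integrable majorant independent of $x \in V$. The dominated convergence theorem then lets me pass the derivative inside the integral:
\[
    \partial^\alpha (f \star g)(x) \;=\; \int f(y)\,\partial^\alpha_x g(x - y)\,\dint y \;=\; (f \star \partial^\alpha g)(x).
\]
Since this identity holds for every multi-index $\alpha$ and every $x_0$, the function $f \star g$ belongs to $C^\infty(\R^n)$.

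The main obstacle, as usual with differentiation under the integral, is producing the integrable dominating function uniformly in $x$ over a neighborhood; everything else is bookkeeping. The compact support of $g$ is what makes this uniform bound trivial here, by confining the $y$-support to a single compact set $L$ as $x$ ranges over $V$. Once that is in place, the inductive step from order $k$ to order $k+1$ is mechanical, and local smoothness at every $x_0$ gives global smoothness.
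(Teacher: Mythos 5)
Your argument is correct, and it is the standard one: localize the integrand to the compact set $x - K$ (respectively $L = V - K$ uniformly over a compact neighborhood $V$), produce an integrable majorant from $\|\partial^\alpha g\|_\infty\,|f|\,\mathbb{1}_L$, and differentiate under the integral sign, yielding $\partial^\alpha(f \star g) = f \star \partial^\alpha g$ for every multi-index $\alpha$. There is nothing to compare against on the paper's side: the theorem is quoted in Section~\ref{sec:why} as a known fact from analysis and no proof is given there. One genuinely valuable point in your write-up is the observation that the statement as printed is too weak in its hypotheses --- for $f$ merely measurable the integral can diverge at some $x$ (take $f$ measurable but not locally integrable, e.g.\ $f(y) = |y|^{-(n+1)}$ near the origin, against a bump function that is positive there), so local integrability of $f$ is the hypothesis actually needed, exactly as in the paper's own Theorem~\ref{th:lift}. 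The only step you leave implicit is that passing the derivative inside the integral via dominated convergence requires bounding the \emph{difference quotients}, which is done by the mean value theorem using the same sup-norm bound on $\partial^\alpha g$; that is routine and does not affect correctness.
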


By using the theorem above and choosing an appropriate function $g$ it is possible to find a smooth approximation of $f$. That being said, by closer inspection, if you choose $g$ to be our defined bump function we are, once again, inadvertently getting the same interpretation as the one our semantics would give.

These two examples illustrate that even though our semantics originates from a completely different starting point, it captures some of the methods already being used by practioners.

A more direct consequence of basing our semantics on distribution theory is that this formalism provides a very precise language in which one can model non-smooth behavior. If one were to use the existing semantics to reason about the examples shown in Section~\ref{sec:examples} there would be fundamental flaws in their interpretation. For instance, in the bouncing ball example the derivative of the velocity with respect to time (i.e. acceleration) would be constant everywhere except at the collision point, in which case it would either be undefined or an arbitrary value. Both alternatives do not capture the whole point of the model, which is that the ball is bouncing rather than accelerating at a constant rate. 

A more egregious example is presented by Li et al. \cite{li2018} where it is shown that by considering non-smooth aspects you improve on the state-of-the-art for differentiable rendering techniques. 

For these reasons we believe that distribution theory provides a useful interpretation of differentiable programming.

\section{Distribution Theory for Non-Smooth Modeling}
\label{sec:examples}

In this section we show how distribution theory can be used in non-smooth modeling. As we have already mentioned, one of the main drawbacks of using a semantics which ignores the points of non-smoothness is that it does not properly account for the interaction of differentiation and integration.

Indeed, the popular implementations of integration in automatic differentiation systems use variants of the Monte Carlo method. Therefore, by linearity, the AD procedure distributes the derivative over the various terms of the sum, resulting in unsound behavior. By contrast, a distribution theoretic approach  - like ours - can safely and soundly apply linearity of differentiation. 

\subsection{Coin Flipping Optimization}
\label{sec:example-coin}

In the context of probabilistic programming it is often required to optimize the expected value of a family of random variables. This can be achieved by using gradient-descent methods and, therefore, it requires to compute the derivative of the function

\[
  p \mapsto \int f(p, x) \dint x
\]

Where $f(p, -)$ is a probability density function (pdf). Under certain conditions, derivatives distribute over integrals, reducing the problem of computing the derivative of the integral to computing the derivative of the function $f(-, x)$. 

As a simple example, consider a $p$-biased coin such that we want to optimize its  expected heads frequency. Therefore we have to differentiate the integral $\int_0^1 f_p(x) \dint x$, where $f_p(x) = \mathbb{1}_{0 \leq x \leq p}$.

As we mentioned above, the partial semantics gives $\frac{\partial f_p(x)}{\partial p} = 0$ and, therefore, its integral would be $0$ as well, making it impossible to move away from the initial guess. On the other hand, by considering its distributional derivative, we get $\delta_{p - x}$ and its integral over the interval $[0, 1]$ will be $\mathbb{1}_{0 \leq x \leq 1}$ \cite{bangaru2021}. Once again we see that the partial semantics is inadequate to correctly model non-smooth behavior.

\subsection{Bouncing Ball}

Assume that we drop a ball from a height $h$ with constant acceleration $g$ and that it collides with the ground instantaneously. Its velocity is given by the plot depicted in Figure~\ref{fig:bounce}.


   
   As we can see, there are points of discontinuity at every natural number greater than $0$. This example shows an important difference between existing semantics for differentiable programming and distribution theory. If you consider a language that does not have distribution theoretic primitives, the acceleration of the system is simply $g$, completely ignoring the effects of the collisions. It is possible to show that the distributional derivative of $v(t)$ is $g + \sum_{n=0}^{\infty} \delta_{2n + 1}$, where each $\delta$ captures a time of collision.

   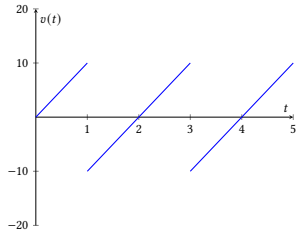
\begin{figure}
     \centering
     \scalebox{.5}{
\begin{tikzpicture}[
  declare function={
    func(\x) = 10*x
    ;
    g(\x) = 10*(x-1) - 10
    ;
    g2(\x) = 10*(x-3) - 10
    ;
  }
]
\begin{axis}[
  axis x line=middle, axis y line=middle,
  ymin=-20, ymax=20, ylabel=$v(t)$,
  xmin=0, xmax=5, xlabel=$t$,
  domain=-10:10,samples=101, 
  ]
  \addplot [domain=0:1,blue,thick] {func(x)};
  \addplot [domain=1:3,blue,thick] {g(x)};
  \addplot [domain=3:5,blue,thick] {g2(x)};
\end{axis}
\end{tikzpicture}
}
\caption{Velocity of a bouncing ball under constant acceleration}
\label{fig:bounce}
\end{figure}

Even though this is a simple physical system, variants of it can be used in differential motion planning algorithms. In such applications it is important to take into consideration the interaction of differentiation and instantaneous collisions \cite{bangaru2021}.







\subsection{Differentiable Ray Tracing}
\label{sec:ray}
In recent years, differentiable methods have found applications in computer graphics. One possible application is to use gradient descent on the rendering algorithm so that you may differentiate the output image with respect to certain scene parameters so that you may, for instance, generate an adversarial picture for an image classifier. Therefore, in order for this method to work the whole image generating process needs to be differentiable, in particular the rendering algorithm. Li et al. \cite{li2018} define a differentiable ray tracing algorithm which fundamentally relies on non-smooth behavior. In this example we will present a simplification of their model. 

Consider a 2D environment parametrized by a continuous space $\Phi$ (e.g. camera position) filled with triangles with different characteristics and with spatial coordinates also captured by $\Phi$. In such a setup the value of a pixel is given by an integral

$$
I = \int \int f(x,y, \Phi) dxdy
$$

In this simple setup the scene function $f$ is simply the sum of functions $f_i$ for each triangle in the scene multiplied by the indicator function of each triangle. In this case, each $f_i$ captures certain properties of the triangles (e.g. its transparency) that are relevant to the rendering of the image and therefore should only be ``active'' within the boundaries of the triangle:

$$
I = \int \int \sum_i \mathbb{1}_{i}f_i(x,y, \Phi) dxdy
$$

Therefore, when trying to reason about the gradient of $I$ we must be careful around the points of non-smoothness in the functions $\mathbb{1}_{i}f_i(x,y, \Phi)$. Following Li et al. \cite{li2018} we have the equation:

\begin{align}
I = \sum_i \int \int \mathbb{1}_{i}f_i(x,y, \Phi) dxdy
\end{align}

We may now take the gradient of $I$:

\begin{align}
  &\nabla \left ( \int \int \mathbb{1}_{i}f_i(x,y, \Phi) dxdy \right) =\\
  &\int \int \nabla \left ( \mathbb{1}_{i}f_i(x,y, \Phi)\right ) dxdy =\\
  &\int\int \mathbb{1}_{i} \nabla f_i(x,y, \Phi) dxdy + \int\int f_i(x,y, \Phi) \nabla \mathbb{1}_{i}  dxdy
\end{align}

What Li et al. have used in their analysis is the fact that the gradient of the indicator function will add a Dirac delta to their integral, which the authors show can be further simplified. This contrasts with the partial semantics, which would simply ignore the non-smoothness and differentiate $f_i(x,y, \Phi)$ without taking into consideration the non-smoothness introduced by the indicator function. Furthermore, this would require commuting the derivative with the integral, which is unsound in this case. Indeed, the authors show that by only using the partial semantics you get a qualitatively worse rendered image. See Section~\ref{sec:validray} for a formal elaboration on how a distribution-theoretic approach resolves this.


\section{Syntax and type system}
\label{sec:syntax}

Figure~\ref{fig:syntax} presents the syntax and type system of $\lambda_\delta$: it is a simply typed $\lambda$-calculus extended with types $\mathcal D(\R^n)$ and $\mathcal D' (\R^n)$ for test functions and distributions, respectively, and some distribution theoretic primitives inhabiting them, which are highlighted. 

We have a differentiation operation, a way of sending a smooth function $f$ to its canonical distribution $T_f$, which we call $\mathsf{lift}$ in our syntax, and a way of defining bump functions
so that we may compute the value of a distribution ``around a point''. Furthermore, we use the fact that distributions can be equipped with a vector space structure and add to our syntax addition and multiplication by a scalar.

To encode conditionals we allow multiplying a smooth function by an indicator function $\mathbb{1}_b$ --- the function such that $\mathbb{1}_b(x) = 1$ if $b(x)$ is true and $0$ otherwise. This construction and the vector space structure allows us to define piecewise smooth functions and define conditionals.

We have also added $\nat$ to the type system and a structural recursion combinator to encode iterators. Intuitively, $\iter t\ u\ n$ composes $u$ with itself $n$ times and is applies this composition to $t$. Note that this iteration procedure is memory-less, i.e. for every step of the iteration the function $u$ does not know in which step it currently is. However, given a function $f : \tau \times \nat \to \tau$, the regular iterator can be used to implement a memoryful one:

$$
\iter \ (t_0, 0) \ (\lambda (t, n).\, (f (t, n), n + 1))
$$

Note that we do not allow to differentiate with respect to a higher-order argument. Instead these features should be used to facilitate the manipulation of functions and distributions.



\begin{example}[Non-smooth programs]

Given two smooth programs $t$ and $u$ and a boolean predicate $b$, we define the program $\mathsf{if }\, b\, \mathsf{ then \,  } t \, \mathsf{ else }\, u$ as the distribution $\mathbb{1}_b t +.  \mathbb{1}_{\neg b}u$
\end{example}

An unexpected consequence of the way we encode if-statements and the typing rule for the indicator function is that we lose the ability to nest conditionals. However, this is not problematic as in the absence of recursive programs, every nested if-statement can be expressed as a single multi-branch if-statement, which is something we can encode as the sum of several indicator functions, each indicator corresponding to an if-statement branch.

That being said, this syntactic restriction is a consequence of a limitation of our model, as morphisms $A \to B$ need to be smooth. However, by Theorem~\ref{th:lift}, every measurable function can be made into a distribution, which suggests that it is theoretically possible to have a more standard presentation of if-statements. For the sake of brevity we will not focus on these issues.

Figure~\ref{fig:typing} presents the typing rules. As it is usually the case, contexts $\Gamma$ are lists of variables and their respective types. We write $\Gamma \vdash t : \tau$ to mean that the program $t$ has type $\tau$ under context $\Gamma$.

\begin{figure}
  \begin{align*}
    \tau := \R\ |\ \R^+ \ |\ \mathsf{Pred}(\R^n)\ |\ \mathbb{N} \ | \tau \times \tau \ |\ \dist{\R^n} \ |\ \test{\R^n} \ |\ \tau \xrightarrow{} \tau
  \end{align*}
  \begin{align*}
    & t, u :=\ x\ |\ r\ |\ (t, u)\ |\ \lettext (x, \ y) = t \text{ in } u \ |\ \\
    & \lettext x = t \text{ in } u\ |\ \lambda x:\tau.t\ |\ t\ u \ | \colorbox{pink} {$\lift (t)$\ |\ $\mathbb{1}_{t}(u)$\ |} \ \\
    & \colorbox{pink}{$ \frac{\partial t}{\partial x_i} $ \ |\  $t\ +.\ u\ $ |\ $t\ ^*.\ u\ $|\ $ \langle t, u \rangle $ | $ \varphi^n(c, r)  \  | \delta_t $}\ | \\ 
    & \iter t\ u\ |\ \text{arithmetic}\ |\ \text{comparators}
  \end{align*}
  \caption{Syntax and type system - distribution theoretic concepts highlighted in red }
  \label{fig:syntax}
\end{figure}

\begin{figure*}
  \centering
$$
\begin{tabular}{c}
  \begin{mathpar}

\inferrule[Lift]{\Gamma \vdash t : \R^n \xrightarrow{} \R}{\Gamma \vdash \lift (t) : \dist {\R^n}}

\and

\inferrule[Indicator Function]{\Gamma \vdash t_1 : \mathsf{Pred}(\R^n) \\ \Gamma \vdash t_2 : \R^n \to \R }{\Gamma \vdash \mathbb{1}_{t_1}(t_2) : \dist{\R^n}}

\and

\inferrule[Differentiation]{\Gamma \vdash t : \dist {\R^n} \\ i \in \{1, ..., n\}}{\Gamma \vdash \frac{\partial t}{\partial x_i}:\dist {\R^n}}

\and

\inferrule[Distribution Application]{\Gamma \vdash t_1 : \dist {\R^n} \\ \Gamma \vdash t_2 : \test{\R^n}}{\Gamma \vdash \langle t_1, t_2 \rangle :\R}

\and

\inferrule[Bump Function]{\Gamma \vdash c : \R^n \\ \Gamma \vdash r : \R^+ \\ n \in \nat}{\Gamma \vdash \varphi^n(c, r):\test{\R^n}}

\and

\inferrule[Dirac delta]{\Gamma \vdash t : \R^n }{\Gamma \vdash \delta_t: \dist {\R^n}}

\end{mathpar}
\end{tabular}
$$
\caption{Selected typing rules}
\label{fig:typing}
\end{figure*}

\begin{example}
  We can write the ReLU function as the following program:

  $$ \cdot \vdash \mathbb{1}_{\lambda x: \R. x \geq 0}(\lambda x:\R. x) : \dist{\R}$$
\end{example}

\begin{example}
  \label{ex:mult}
  We can write a function that transforms a function $f : \R^m \to \R^2$ into a pair of functions of the type $\R^m \to \R$ with the following:
\begin{align*}
      &\cdot \vdash \lambda f. (\lambda x. \text{let }(a, b) = f\ x \text{ in } a,  \lambda x. \text{let }(a, b) = f\ x \text{ in } b)
\end{align*}

\end{example}

By repeatedly applying the above example, we can transform an arbitrary function of type $\R^m \to \R^n$ to an $n$-tuple of functions $\R^m \to \R$.

\begin{example}
We can write a function that takes the derivative of distribution $f : \dist \R$ at some point $x$ and applies to it the bump function centered at $x$ with radius $\varepsilon$ as the following:
    \begin{align*}
        &der : \dist \R \to \R \to \R \to \R\\
        &der\ f\ x\ \varepsilon = \left \langle \frac{\partial f}{\partial x},  \varphi(x, \varepsilon) \right \rangle
    \end{align*}
    where we use Haskell-like syntactic sugar for convenience.
\end{example}

\begin{example}
We can write the gradient descent algorithm running for $n$ iterations, starting at a point $x_0$, for a distribution $f$ and using $\varepsilon$ as the radius of the bump function as the following program:

    \begin{align*}
    &gradDesc : \dist \R \to \R \to \R \to \nat \to \R\\
    &gradDesc\ f\ x_0\ \varepsilon = \\
    &\ \ \ \ \iter x_0 \left ( \lambda x_n:\R^n. x_n - (der\ f\ x\ \varepsilon) \right )
    \end{align*}
    Where again we use Haskell-like syntactic sugar.
\end{example}

In the example above we have defined a gradient descent algorithm for functions $\R\to \R$ which can be used to maximize or minimize the given input function in certain circumstances. 


The situation is a bit subtler when working with functions $\R^n \to \R^m$, since elements of type $\dist{\R^n}$ are, in a way, generalized functions with codomain $\R$, making it unclear at first how to define their gradients using distributions. We get around this by using the universal property of products to factor every function $f : \R^n \to \R^m$ into $m$ functions $f_i : \R^n \to \R$ and defining its gradient using the $m$-tuple of distributional derivatives of $f_i$, as shown in Example~\ref{ex:mult}.


\section{Semantics}
\label{sec:sem}

\begin{figure}
  \begin{center}
  \begin{align*}
   \sem{\R} &= \R \\
   \sem{\R^+} &= \R^+ \\
   \sem{\mathsf{Pred}(\R^n)} &= \mathsf{Pred}(\R^n)\\
   \sem{\nat} &= \nat\\
   \sem{\tau_1 \times \tau_2} &= \sem{\tau_1} \times \sem{\tau_2}\\
   \sem{\test{\R^n}} &= \test{\R^n}\\
   \sem{\dist{\R^n}} &= \dist{\R^n}\\
   \sem{\tau_1 \rightarrow \tau_2} &= \sem{\tau_1} \Rightarrow \sem{\tau_2}
  \end{align*}
\end{center}
  \caption{Type interpretation}
  \label{fig:typesemantics}  
\end{figure}

To every type $\tau$ in our language we associate a diffeological space $\sem \tau$ and every well typed program $\Gamma \vdash t : \tau$ gives rise to a morphism $\sem \Gamma \to \sem \tau$ in $\cat{Diff}$. As usual, we define the semantics by structural induction on the typing derivation. For the sake of clarity, we will write $\sem t$ instead of $\sem {\Gamma \vdash t : \tau}$

\subsection{Interpreting types and non-standard booleans}

  Figure~\ref{fig:typesemantics} shows the semantics of well-typed terms. Since $\cat{Diff}$ is cartesian closed, we interpret the simply typed fragment of our language using the standard constructions. The distribution theoretic types are interpreted as explained in Section~\ref{sec:background}. To interpret the type $\nat$ we will use the fact that $\cat{Diff}$ is cocomplete and therefore can interpret inductive types --- i.e. initial algebras for polynomial functors. Furthermore, we will use the initiality of inductive types to define the semantics of the $\iter$ combinator. Interpreting $\B$ is a bit subtler.\\

  \noindent\textbf{Non-standard booleans.} In cartesian categories with coproducts, booleans are usually defined as the coproduct $1 + 1$, where $1$ is the unit for the cartesian strucuture, and if-statements are defined by the universal property of coproducts. Unfortunately, if we define $\sem \B = 1 + 1$ in $\cat{Diff}$, the only smooth predicates $\R^n \to 1+1$ are the constant functions, which is obviously not expressive enough.

  In order to remedy this we will assume that we have types $\mathsf{Pred}(\R^n)$ whose elements are measurable subsets of $\R^n$. In an earlier version $\lambda_\delta$ we had a type $\B$ of boolean that were equipped with a diffeology such that functions $\R^n \to \B$ were the measurable functions. The problem with this approach is that since we were using the closed structure of $\cat{Diff}$ to interpret predicates, we could partially apply predicates. Unfortunately, currying predicates makes it fairly easy to write non-smooth programs. By using this new approach we can avoid the problems of previous approaches. The key ingredient is equipping $\mathsf{Pred}(\R^n)$ with the constant diffeology.

\begin{theorem}
  $\B = \diff{\mathsf{Pred}(\R^n)}$, where  $\ \mathcal{P}_U^{\mathsf{Pred}(\R^n)}$ are the constant functions $U \to \mathsf{Pred}(\R^n)$, is a diffeological space.
\end{theorem}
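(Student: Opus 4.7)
The plan is to verify directly the three axioms of a diffeological space for the pair $(\mathsf{Pred}(\R^n), \mathcal{P}^U_{\mathsf{Pred}(\R^n)})$, noting that this is essentially the same argument as Lemma~\ref{lem:const}: the proof of that lemma never uses anything specific about the codomain being $\R$, so it generalizes immediately to any set, in particular to the set $\mathsf{Pred}(\R^n)$ of measurable subsets of $\R^n$.

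First, I would observe that every constant function $U \to \mathsf{Pred}(\R^n)$ is by definition in $\mathcal{P}^U_{\mathsf{Pred}(\R^n)}$, so the first axiom holds trivially. Second, for the pullback-stability axiom, given a smooth map $f : V \to U$ and a constant plot $p : U \to \mathsf{Pred}(\R^n)$, the composite $p \circ f$ is again constant, because post-composing any function with a constant function yields a constant function; smoothness of $f$ plays no role here since the codomain carries only the constant (discrete) structure.

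The step that requires the most care is the sheaf-like locality axiom: if $f : U \to \mathsf{Pred}(\R^n)$ is locally constant in the sense that every $u \in U$ lies in some open $V \subseteq U$ with $f|_V$ constant, we need $f \in \mathcal{P}^U_{\mathsf{Pred}(\R^n)}$. On each connected component of $U$, local constancy propagates to global constancy, so $f$ is constant on components. Following the standard convention (and the convention used in Lemma~\ref{lem:const}), I interpret the "constant diffeology" as the \emph{discrete} diffeology of locally constant functions; with this reading the third axiom is immediate. This is the main and essentially only obstacle, and it is more a matter of fixing the precise interpretation of ``constant'' than a genuine mathematical difficulty.

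Having checked the three axioms, the conclusion follows, and I would briefly remark that this diffeology is exactly the one that makes Lemma~\ref{lem:constmor} applicable, so that every function out of $\B$ into another constantly-diffeologized space is automatically a $\cat{Diff}$-morphism. This is the property that, as the surrounding text emphasizes, blocks the pathological partial applications of predicates and justifies the choice of this particular structure on $\mathsf{Pred}(\R^n)$.
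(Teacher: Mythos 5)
Your proof is correct and takes essentially the same route as the paper, which simply declares the theorem a special case of Lemma~\ref{lem:const} (the constant diffeology on an arbitrary set). The one point where you go beyond the paper is worth keeping: as you observe, the locality axiom is \emph{not} satisfied by the set of globally constant functions when $U$ is disconnected (a function taking different values on two components is locally constant but not constant, yet axiom three forces it to be a plot), so ``constant diffeology'' must be read as the discrete diffeology of \emph{locally} constant functions. This is a genuine, if minor, correction to the literal statement of Lemma~\ref{lem:const}; it is harmless downstream, since composites with locally constant maps are still locally constant, so Lemma~\ref{lem:constmor} and the boolean-algebra structure on $\mathsf{Pred}(\R^n)$ go through unchanged.
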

\begin{proof}
     This is just a special case of Lemma~\ref{lem:const}. 
\end{proof}

Something appealing about this construction is that it is possible to equip $\mathsf{Pred}(\R^n)$ with a boolean algebra structure.

\begin{theorem}
$\mathsf{Pred}(\R^n)$ is a boolean algebra.
\end{theorem}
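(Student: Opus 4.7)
The plan is to exhibit explicit boolean algebra operations on $\mathsf{Pred}(\R^n)$ and then observe that the boolean algebra axioms reduce to familiar set-theoretic identities. Since the elements of $\mathsf{Pred}(\R^n)$ are measurable subsets of $\R^n$, the natural candidates for the operations are set-theoretic: take meet to be intersection $A \wedge B := A \cap B$, join to be union $A \vee B := A \cup B$, complement to be $\neg A := \R^n \setminus A$, top to be $\top := \R^n$, and bottom to be $\bot := \emptyset$.

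The first thing I would verify is that these operations are well-defined, i.e., that $\mathsf{Pred}(\R^n)$ is closed under them. This is immediate from the fact that the measurable subsets of $\R^n$ form a $\sigma$-algebra: they contain $\emptyset$ and $\R^n$, and they are closed under complements and countable (hence finite) unions and intersections. So each of the five operations above returns a measurable set when given measurable sets as input.

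Once the operations are in place, checking the boolean algebra axioms (commutativity and associativity of $\wedge$ and $\vee$, absorption, distributivity, the identity laws $A \wedge \top = A$ and $A \vee \bot = A$, and the complement laws $A \wedge \neg A = \bot$ and $A \vee \neg A = \top$) amounts to invoking the corresponding elementary identities for set intersection, union, and complement in $\R^n$. Each axiom unfolds to a tautology about set membership and can be verified pointwise.

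I do not expect a real obstacle here: the content of the theorem is essentially that the $\sigma$-algebra of measurable sets, viewed purely algebraically, is a boolean algebra. The only subtlety worth flagging is that one must use measurability rather than, say, openness, in order to get closure under complements; this is precisely why $\mathsf{Pred}(\R^n)$ was chosen to consist of measurable sets in the first place.
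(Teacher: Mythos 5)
Your set-theoretic argument is correct as far as it goes: the measurable subsets of $\R^n$ form a $\sigma$-algebra, hence a boolean algebra under intersection, union and complement, with $\emptyset$ and $\R^n$ as bottom and top, and each axiom reduces to a pointwise tautology. But that is not the content the paper is actually after here. The theorem sits immediately after $\mathsf{Pred}(\R^n)$ has been equipped with the constant diffeology, and the paper's own proof is a one-line citation of Lemma~\ref{lem:constmor}, the statement that every function between spaces carrying the constant diffeology is a $\cat{Diff}$ morphism. In other words, the claim being made is that $\mathsf{Pred}(\R^n)$ is a boolean algebra \emph{object in} $\cat{Diff}$: the operations $\cap, \cup : \mathsf{Pred}(\R^n) \times \mathsf{Pred}(\R^n) \to \mathsf{Pred}(\R^n)$ and $\neg : \mathsf{Pred}(\R^n) \to \mathsf{Pred}(\R^n)$ must be smooth maps, since they are used in the semantics of programs (e.g.\ $\mathbb{1}_{\neg b}$ in Theorem~\ref{th:main}). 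Your proof never touches the diffeology, so it omits the one step the paper regards as the actual content.

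The missing step is, fortunately, also routine, but it is a different routine from the one you carried out. Because $\mathsf{Pred}(\R^n)$ carries the constant diffeology, and the product of constant-diffeology spaces again has only constant plots (a plot into the product is constant iff both projections are), composing any of the boolean operations with a plot yields a constant function, which is a plot. Hence every boolean operation is automatically a $\cat{Diff}$ morphism by Lemma~\ref{lem:constmor}, with no case analysis on the operations at all. So the complete argument has two halves: the underlying set carries a boolean algebra structure (your observation about $\sigma$-algebras, which the paper takes for granted), and that structure is internal to $\cat{Diff}$ because the diffeology is constant (the paper's proof, which you are missing). Adding one sentence invoking the constant diffeology would close the gap.
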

\begin{proof}
     This is a direct consequence of Lemma~\ref{lem:constmor}.
\end{proof}

The reason why we care about these non-standard booleans is because they validate the following theorem:

\begin{theorem}
\label{th:indicator}
  The function $\mathbb{1} : \mathsf{Pred}(\R^n) \times (\R^n \to \R) \to \dist{\R^n}$ is a $\cat{Diff}$ morphism.
\end{theorem}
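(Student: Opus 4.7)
The goal is to check that the map $\mathbb{1}$, composed with any plot $p : U \to \sem{\mathsf{Pred}(\R^n)} \times \sem{\R^n \to \R}$, yields a plot of $\dist{\R^n}$. The plan is to use the constant diffeology on predicates to pin down a single indicator, unfold the subobject and closed structure of $\dist{\R^n}$, and reduce the remaining task to a parameterized-integral smoothness statement.

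First I would split $p = (p_1, p_2)$ using the product diffeology. Since $p_1 : U \to \mathsf{Pred}(\R^n)$ is valued in the constant diffeology, it is locally constant, so by the locality axiom of diffeologies I may restrict to an open neighborhood on which $p_1 \equiv b$ for a fixed predicate $b \in \mathsf{Pred}(\R^n)$. Cartesian closure of $\cat{Diff}$ then presents $p_2$ as a smooth map $\tilde p_2 : U \times \R^n \to \R$.

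Next, because $\dist{\R^n}$ sits as a diffeological subobject of $\test{\R^n} \Rightarrow \R$, it suffices to show that the composite $\mathbb{1}\circ p : U \to \test{\R^n}\Rightarrow\R$, defined by
\[ u \longmapsto \Bigl( \varphi \longmapsto \int_{\R^n} \mathbb{1}_b(x)\, \tilde p_2(u, x)\, \varphi(x) \dint x \Bigr), \]
is a plot landing inside the linear-functional subset. Linearity in $\varphi$ is immediate from the formula, so landing in $\dist{\R^n}$ is free. Invoking closure once more, being a plot of $\test{\R^n}\Rightarrow\R$ reduces to smoothness of the joint map $\Phi(u, \varphi) = \int_{\R^n} \mathbb{1}_b(x)\, \tilde p_2(u, x)\, \varphi(x) \dint x$ as a $\cat{Diff}$ morphism $U \times \test{\R^n} \to \R$.

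The hard part is this last smoothness assertion. For any probe plot $(r, q) : V \to U \times \test{\R^n}$, with $r$ smooth and $\tilde q : V \times \R^n \to \R$ smooth with each slice $\tilde q(v,-)$ compactly supported, I would need to show that $v \mapsto \int_{\R^n} \mathbb{1}_b(x)\, \tilde p_2(r(v), x)\, \tilde q(v, x) \dint x$ is smooth in $v$. The tool is differentiation under the integral sign, treating $\mathbb{1}_b$ as a fixed bounded measurable weight that is harmless once a locally integrable dominating function for the $v$-derivatives of the integrand is produced. The principal technical nuisance is that the supports of $\tilde q(v,-)$ need not be locally uniform in $v$, so one must work in a compact neighborhood of each $v_0 \in V$ and bound derivatives by an $L^1$ envelope of the form $\mathbb{1}_b(x)\, M(x)$ with $M$ compactly supported. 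This is the same obstruction that is handled implicitly in the proof of Lemma \ref{th:smooth}, and I would ultimately reduce the claim to that lemma together with the elementary observation that pointwise multiplication by $\mathbb{1}_b$ preserves the class of smooth families of locally integrable distributions.
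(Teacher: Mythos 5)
Your argument is correct and follows essentially the same route the paper intends: the paper's proof is a one-line appeal to the analogy with Lemma~\ref{th:smooth}, and your reduction --- locally freezing the predicate component via the constant diffeology, unfolding the subobject and exponential structure of $\dist{\R^n}$, and then treating $\mathbb{1}_b$ as a bounded measurable weight inside the parameterized integral already handled by Lemma~\ref{th:smooth} --- is precisely the elaboration of that analogy, and it correctly isolates the constant diffeology on $\mathsf{Pred}(\R^n)$ as the ingredient that makes the predicate argument harmless. The two points you leave informal (continuity, not just linearity, of the resulting functional on $\test{\R^n}$, which in fact follows from Theorem~\ref{th:lift} since $\mathbb{1}_b\cdot\tilde p_2(u,-)$ is locally integrable; and the lack of locally uniform supports for plots of $\test{\R^n}$) are real but are inherited from, and equally implicit in, the paper's own treatment of Lemma~\ref{th:smooth}, so deferring them there matches the paper's level of rigor.
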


This theorem is used to interpret the $\mathbb{1}$ operator of $\lambda_\delta$ and can be proven analogously to Theorem~\ref{th:smooth}.

\begin{figure}
  \label{fig:sem}
    
    \begin{tabular}{c}
      \begin{mathpar}
        \sem{x}(\gamma) = \gamma(x)

        \and

        \sem{(t_1, t_2)}(\gamma) = (\sem{t_1}(\gamma), \sem{t_2}(\gamma))

        \and

        \sem{\lettext (x, \ y) = t \text{ in } u}(\gamma) = \sem{u}(\gamma, \sem{t}(\gamma))

        \and
        
        \sem{\lambda x : \tau. t}(\gamma) = \lambda x. \sem{t}(\gamma, x)

        \and
        \sem{t\ u}(\gamma) = \sem{t}(\gamma, \sem{u}(\gamma))

        \and
        \sem{\lift (t)}(\gamma) = \mathbf{T}_{\sem t(\gamma)}

        \and
        \sem{\mathbb{1}_t(u)}(\gamma) = T_{\mathbb{1}_{\sem{t}(\gamma)}\sem{u}(\gamma)}

        \and
        
        \sem{\frac{\partial t}{\partial x_i}}(\gamma) = \frac{\partial (\sem{t}(\gamma))}{\partial x_i}

        \and
        \sem{t\ +.\ u}(\gamma) = \sem{t}(\gamma) + \sem{u}(\gamma) 

        \and

        \sem{t\ *.\ u}(\gamma) = \sem{t}(\gamma) * \sem{u}(\gamma) 

        \and

        \sem{\lrang {t,u}}(\gamma) = \lrang {\sem{t}(\gamma), \sem{u}(\gamma)} 

        \and

        \sem{\varphi^n(c, r)}(\gamma) = \varphi_{\sem{r}(\gamma)}^{\sem{c}(\gamma)}

        \and

        \sem{\iter t\ u}(\gamma) = Rec_\nat (\sem{t}(\gamma)) (\sem{u}(\gamma)) 

        \and

        \sem{\delta_t}(\gamma) = \delta_{\sem{t}(\gamma)}

      \end{mathpar}
    \end{tabular}
    \caption{Denotational semantics for $\lambda_\delta$}
    \label{fig:semantics}

  \end{figure}

  \subsection{Semantics of well-typed terms}

  Figure~\ref{fig:semantics} shows the semantics of our constructions. It is mostly standard: the simply typed structure is interpreted by the cartesian closed structure of $\cat{Diff}$ presented in Section~\ref{sec:background}, the iterator is defined as the unique arrow given by the object $\nat$ equipped with the appropriate arrows being an initial algebra\footnote{known as catamorphisms in the functional programming community}. Given an element $x$ of a diffeological space $X$ and an endofunction $f : X \to X$, we name the arrow given by initiality $Rec_\nat \, x \, f$.

  The cartesian closed structure of $\cat{Diff}$ and the fact that the distribution theoretic types are defined as subobjects means that the distribution application is also smooth. The smoothness of the distributional derivative follows by a similar proof. Finally, the validity of the interpretations of $\delta$, $\mathsf{lift}$ and $\mathbb{1}$ are given by Theorem~\ref{th:dirac}, Theorem~\ref{th:lift} and Theorem~\ref{th:indicator}.

  Note that the construction of our non-standard booleans impose the predicates in our language to be measurable sets. In practice, however, it is extremely hard to define a non-measurable set, making this restriction almost non-existent.

To reiterate, even though $\cat{Diff}$ might look a bit too abstract at times, morphisms $\R^n \to \R^m$ are exactly the smooth functions between $\R^n$ and $R^m$. Furthermore, since $\dist{\R^n}$ is also exactly equal to the set of distributions over $\R^n$, the vast catalog of theorems from distribution theory can be used to reason about programs.

\subsection{Well-behaved conditionals}

As we illustrate in Appendix~\ref{sec:impl} with PyTorch's max function, the current handling of the differentiation of non-smooth programs and conditionals in production environments can lead to unexpected aberrant behavior. The key problem that gives rise to this behavior is that Theorem~\ref{th:intro} does not hold in existing semantics in the context of differentiation:

Our semantics validates this equation  with the following theorem:

\begin{theorem}
  \label{th:main}
  For every context $\Gamma$ and well-typed programs $\Gamma \vdash b : \mathsf{Pred}(\R^n)$, $\Gamma \vdash t, u, e : \R^n \to \R$, if $\sem{b}(r) = ff$ implies $\sem{t}(r) = \sem{u}(r)$ and $\sem{b}(r) = tt$ implies $\sem{t}(r) = \sem{e}(r)$then we have the equality $\sem{\mathbb{1}_b e +. \mathbb{1}_{\neg b}u} = \sem {\lift{(t)}}$.
\end{theorem}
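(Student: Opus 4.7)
The plan is to unfold the denotational semantics on both sides using Figure~\ref{fig:semantics}, invoke linearity of the canonical lift $T_{(-)}$, and then verify the remaining identity by a pointwise case split on the spatial variable. Before anything else I would separate the two distinct roles played by the letter $r$ in the statement: there is an ambient environment $\gamma \in \sem{\Gamma}$ under which the judgments are interpreted, and a spatial point $r \in \R^n$ at which the predicate and the functions of type $\R^n \to \R$ are evaluated. With that in mind, the hypotheses read: for every $\gamma$ and every $r \in \R^n$, $r \notin \sem{b}(\gamma)$ implies $\sem{t}(\gamma)(r) = \sem{u}(\gamma)(r)$, and $r \in \sem{b}(\gamma)$ implies $\sem{t}(\gamma)(r) = \sem{e}(\gamma)(r)$.

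First I would fix $\gamma$ and rewrite each side using the clauses for $+.$, $\mathbb{1}$ and $\lift$, obtaining
$$ \sem{\mathbb{1}_b e +.\ \mathbb{1}_{\neg b} u}(\gamma) \;=\; T_{\mathbb{1}_{\sem{b}(\gamma)}\,\sem{e}(\gamma)} \;+\; T_{\mathbb{1}_{\neg \sem{b}(\gamma)}\,\sem{u}(\gamma)}, \qquad \sem{\lift(t)}(\gamma) \;=\; T_{\sem{t}(\gamma)}. $$
Since $T_{(-)}$ acts by integration against test functions, it is linear in its subscript, so $T_f + T_g = T_{f+g}$. It therefore suffices to prove the function-level identity
$$ \mathbb{1}_{\sem{b}(\gamma)}\,\sem{e}(\gamma) \;+\; \mathbb{1}_{\neg \sem{b}(\gamma)}\,\sem{u}(\gamma) \;=\; \sem{t}(\gamma) $$
in $\R^n \to \R$ for each $\gamma$.

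Next I would argue pointwise. Fix $r \in \R^n$. If $r \in \sem{b}(\gamma)$, the second indicator vanishes and the left-hand side collapses to $\sem{e}(\gamma)(r)$, which equals $\sem{t}(\gamma)(r)$ by the second hypothesis; if $r \notin \sem{b}(\gamma)$, the first indicator vanishes and the left-hand side collapses to $\sem{u}(\gamma)(r)$, which equals $\sem{t}(\gamma)(r)$ by the first hypothesis. These cases are exhaustive, so the two functions agree everywhere. Applying $T$ yields equality of distributions for every $\gamma$, which is exactly equality of the two $\cat{Diff}$ morphisms $\sem{\Gamma} \to \dist{\R^n}$.

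The main obstacle is really just the notational unpacking I did in the first paragraph; once environment and spatial variable are properly separated, the proof reduces to linearity of $T_{(-)}$ plus a two-case split. A minor subtlety worth flagging is that, in general, equality of the underlying measurable functions almost everywhere would already force the associated distributions to coincide, but here we in fact obtain equality everywhere on $\R^n$, so no measure-theoretic caveats are needed.
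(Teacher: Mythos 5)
Your proof is correct and follows essentially the same route as the paper's: the paper's one-line chain $\sem{\mathbb{1}_b e +. \mathbb{1}_{\neg b}u} = \sem{\mathbb{1}_b t +. \mathbb{1}_{\neg b}t} = \sem{\lift(t)}$ is exactly your pointwise case split (justifying the first equality) followed by linearity of $T_{(-)}$ and the fact that the indicators of $b$ and $\neg b$ partition $\R^n$ (justifying the second). Your explicit separation of the environment $\gamma$ from the spatial variable $r$ is a useful clarification of the statement, but it does not change the argument.
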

\begin{proof}
     $\sem{\mathbb{1}_b e +. \mathbb{1}_{\neg b}u} = 
     \sem{\mathbb{1}_b t +. \mathbb{1}_{\neg b}t} = \sem {\lift{(t)}}$
\end{proof}

This theorem implies Theorem~\ref{th:intro}, resolving this issue.


\section{Case studies}
\label{ref:prac}

\subsection{Bouncing Ball}

Let us consider a variant of the example presented in Section~\ref{sec:examples} where there is a ball moving at constant velocity $v$ perpendicular to a wall and it elastically hits a wall at time $t_{wall}$ and moves in the opposite direction with velocity $-v$. The program that computes the velocity is given by:

\begin{equation}
    u = \mathbb{1}_{t <= t_{wall}}(\lambda t.\, v) +. \mathbb{1}_{t > t_{wall}}(\lambda t.\, -v)
    \label{eq:ball}
\end{equation}

By unfolding the definitions we can easily show $\sem{\frac{\partial u}{\partial t}} = \delta_{t_{wall}}$. As it is usually the case, the Dirac delta is modelling the time of collision with the wall.

Existing semantics of differentiable programming fail in two ways when trying to model phenomena with non-smooth behavior like this. The first is to make the whole phenomenon inexpressible by disallowing any form of non-smooth behavior. This means that \ref{eq:ball} and other simple physical phenomena are completely inexpressible in these semantics. Nonetheless, this approach was used by \cite{huot2020} and \cite{ehrhard2003} among others.

The second approach is somewhat more interesting, in that it utilizes a form of partial semantics that completely ignores the point of collision, such as the semantics of \cite{abadi2019}. In their semantics, it is possible to express a form of \ref{eq:ball} as the following program:

\[
M = \ifthen{t < 0}{v}{-v}
\]

However, by differentiating the program above using their semantics we get 

\[\sem{\frac{\dint M}{\dint t}} = 
\begin{cases}
0\ \ & \text{if } t \neq 0\\
\bot\ \ & \text{otherwise}
\end{cases}\]

Were we to attempt to model the position of the ball using the velocity with these existing semantics, the model would not be able to distinguish between the simple physical reality of the ball bouncing off of the wall and the ball passing through it. More recently, Matthijs Vakar has also posted some preliminary work on extending this flavor of partiality semantics on the arXiv, but their approach still fails to model this phenomena. Even the approach in \cite{sherman2020} with its advances in modeling non-smooth functions is unable to distinguish between these situations as they also force the function into undefined behavior at points of discontinuity.

\subsection{Derivatives of Intergrals}
In this example we will show how our syntax and semantics can properly deal with the interaction of differentiation and integration. We illustrate this by coming back to the example in Section~\ref{sec:example-coin} showing that the equation $ \frac{\dint }{\dint p} \int \mathbb{1}_{0 \leq x \leq p} \dint x = \mathbb{1}_{0 \leq x \leq 1}$ holds in our semantics.

First, we present a method of computing integrals of compactly supported functions using our semantics. Let $K_1 \subset K_2$ be two compact subsets of $\R^n$ such that $K_1$ is strictly contained in $K_2$. There exists a smooth function $\psi$ such that $0 \leq \psi \leq 1$, $\psi|_{K_1} \equiv 1$ and $\psi|_{\R^n \backslash K_2} \equiv 0$ \cite{lee2013}. Because this function is smooth and has compact support, it allows us to effectively construct a test function that is constant on any arbitrary compact set. Intuitively, this function is $1$ on $K_1$, transitions smoothly on from $1$ to $0$ on $K_2 \setminus K_1$, and is $0$ everywhere else.

We will denote this function as $\psi_{K_1}^{K_2}$. If we apply a distribution $T_f$ to it we get:
\begin{align*}
  &\lrang{T_f, \psi_{K_1}^{K_2}} = \int_{\R^n} f(x) \psi_{K_1}^{K_2}(x) \mathop{}\! \mathrm{d} x = \\
  &\int_{K_1} f(x) \mathop{}\! \mathrm{d} x + \int_{K_2 \backslash K_1} f(x) \varphi_{K_1}^{K_2}(x) \mathop{}\! \mathrm{d} x
\end{align*}
    
Importantly, if $f$ has compact support, then by carefully choosing $K_1$ such that $\text{supp}(f) \subseteq K_1$, we can get the equality $\lrang{T_f, \psi_{K_1}^{K_2}} = \int_{K_1} f(x) \dint x$.

We want to compute the derivative of $\lambda p. \int \mathbb{1}_{0 < x < p}$, which, as we explained, can be computed by the term

\[
  t = \frac{\partial}{\partial t}(\lift (\lambda\, t.\, \lrang{\mathbb{1}_{(\lambda \, x. \, 0 < x < t)}, \psi_{[ 0, 1 ]}^{K_2}}))
\].

Now it is easy to show that $\sem{t} = \sem{\mathbb{1}_{0 \leq x \leq 1} (\lambda x. 1)}$

    Note that while we do not explicitly have the $\psi_{K_1}^{K_2}$ terms in our syntax, if we add constructors $\psi_{K_1}^{K_2}$ to $\lambda_\delta$ where $K_1$ and $K_2$ are cubes or spheres such that $K_1 \subset K_2$ then we would already be able to program many interesting examples, as we will see in the next section.
    
    As many applications heavily rely on the interaction of integrals and differentiation\cite{bangaru2021,li2018,bangaru2020}, it is important to develop semantics that can soundly justify these equations. Indeed, this kind of reasoning is frequently used by practioners, as illustrated by Bangaru and Michel et al. \cite{bangaru2021}, where these ideas have been used to automatically rewrite programs in the differentiable programming language TEG. It is important to note that their system cannot properly reason about the example above.

While the programming model imposed by working with distributions cannot compute the value of a function at a point, we can approximate it by applying a bump function with a sufficiently small radius. Unfortunately this becomes problematic when considering what an operational semantics to such language would look like, as applying a test function to a distribution that originated from a smooth function is integrating their product, meaning that an operational semantics for $\lambda_\delta$ would require computing integrals.

For implementation purposes, however, the point above is not too problematic, as there are many algorithms that can efficiently compute approximations to arbitrary integrals --- we have implemented a small library to demonstrate the use of distributions and key concepts from our language in a software artifact. See Appendix~\ref{sec:impl} for more details.



\subsection{Validating Equations for a Ray Tracing Algorithm}
\label{sec:validray}

We have already argued in Section~\ref{sec:ray} that using a non-distribution-theoretic semantics makes it impossible to validate the equations used by Li et al. In this section we are going to show how we can use $\lambda_\delta$ and its semantics to correctly reason about a simplified implementation of the ray tracing algorithm.

As we have shown above, $\lambda_\delta$ provides a simple way of computing the integrals of compactly supported functions.

\begin{align*}
  &integral : \dist {\R^2} \to \R^2 \to \R^2 \to \R \\
  &integral \ T \ (x_1, y_1) \ (x_2, y_2) \ = \lrang{T, \psi_{K_1}^{K_2}},
\end{align*}

Where $K_1 = [x_1,x_2] \times [y_1, y_2]$ and $K_2 = [2x_1, 2x_2] \times [2y_1, 2y_2]$.
Next, we need to write a program that computes the integrand $f(x, y, \Phi)$. For the sake of convenience, we extend $\lambda_\delta$ with lists which is semantically valid because $\cat{Diff}$ is cocomplete and, therefore, can interpret inductive types and their inductive principles, which for lists is the familiar fold. Syntactically $\lambda_\delta$ will look exactly like a regular $\lambda$-calculus equipped with inductive types; see Huot et al. \cite{huot2020} for more details.

With this extension we can write a program that, given a list of triangles and their characteristic functions, returns the sum of distributions $\sum_i \mathbb{1}_{\alpha_i}f_i$, where $\alpha_i$ is the predicate for the $i$-th triangle and $f_i$ is its characteristic function:

    \begin{align*}
        &charFunc : \Phi \to [(\mathsf{Pred}(\R^n), \R^2 \times \Phi \to \R)] \to \dist{\R^2}\\
        &charFunc\ \underscore \ [ \, ] = 0\\
        &charFunc\ \varphi\ ((f_i, \alpha_i)::tl)  = \mathbb{1}_{\alpha_i}(\lambda r.\ f_i(r, \varphi)) +_. (charFunc \ \varphi \ tl)
    \end{align*}

    Once again, note that the program above is structuraly recursive and thus can be implemented with a fold without having to make use of full recursion. We can now implement the pixel value function $I$:

    \begin{align*}
        &I : [(\mathsf{Pred}(\R^2), \R^2 \times \Phi \to \R)] \to \R^2 \to \R^2 \to (\Phi \to \R) \\
        &I\ l \ (x_1, x_2)\ (y_1, y_2) = \lambda \varphi.\ integral \ (charFunc \ \varphi \ l) \ (x_1, x_2) \ (y_1, y_2)
    \end{align*}


    Assuming that the support of $charFunc \ \varphi \ l$ is a subset of the integral domain, it is easy to show that the semantics of I is exactly the formula presented by Li et al. By unfolding the semantics of $\nabla (\lift \ (I\ l\ (x_1, x_2)\ (y_1, y_2)))$ we can see where adopting a non-distributional semantics would be problematic:


    \begin{align*}
      &\sem{\nabla (\lift \ (I\ l\ (x_1, x_2)\ (y_1, y_2))} = \nabla \sem{(\lift \ I\ l\ (x_1, x_2)\ (y_1, y_2))} = \\
      &\nabla \int_{(x_1,y_1)}^{(x_2, y_2)} (charFunc \ \varphi \ l) = \nabla \int_{(x_1,y_1)}^{(x_2, y_2)} \sum_i \mathbb{1}_{\alpha_i}(\lambda r.\ f_i (r, \varphi)) =\\
      & \sum_i \nabla \int_{(x_1,y_1)}^{(x_2, y_2)} \mathbb{1}_{\alpha_i}(\lambda r.\ f_i (r, \varphi))
    \end{align*}

    Existing AD algorithms will always commute with integrals --- as they are implemented using finite sums --- even when the Leibniz theorem \footnote{\url{https://en.wikipedia.org/wiki/Leibniz_integral_rule}} does not hold. Fortunately, distributions do not suffer from this drawback, so our semantics would be able to commute $\nabla$ and $\int$ and soundly apply the last steps of the reasoning done in Section~\ref{sec:ray}. 
    
    The incompatibility of regular AD and integration has been observed by Bangaru, Michel et al in recent work\cite{bangaru2021}, where they have coined the terms "discretize-then-differentiate" and "differentiate-then-discretize" to contrast the standard approach to AD with the distribution-theoretic one.

\section{Translating to and from other differentiable languages}

\subsection{Embedding a Smooth $\lambda$-calculus}
\label{sec:embed}

It is an important question to understand how our language and semantics relate to existing semantics of differentiable programming. In this section we will show how the language proposed by \cite{huot2020} can be soundly embedded in our language and how their AD program transformation relates to our differentiation operation.

In their work they define a simply typed $\lambda$-calculus with one base type for the real numbers and smooth primitives (e.g. the sine function). Their differentiation program transformation follows the dual number approach to AD, where each input carries an extra parameter corresponding to the derivative in that argument. This is achieved by defining the following type transformation:

\begin{align*}
    \D(\R) &= \R \times \R\\
    \D(\tau \times \tau) &= \D(\tau) \times \D(\tau)\\
    \D(\tau \to \tau) &= \D(\tau) \to \D(\tau)
\end{align*}

They also define a transformation $\D(-)$ at the term level and prove that if $\Gamma \vdash t : \tau$ then $\D(\Gamma) \vdash \D(t) : \D(\tau)$. Their translation is elegant and can be shown to be functorial, which is a consequence of its compositionality. Their (simplified) correctness property is:

\begin{Lemma}[\cite{huot2020}]
\label{lem:huot}
If $x : \R \vdash t : \R$ then for every smooth function $f : \R \to \R$, $(f, f' ); \sem{\D(t)} = (f; \sem{t}, (f; \sem{t})')$, where $(-)'$ is the derivative operation.
\end{Lemma}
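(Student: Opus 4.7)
The plan is to prove this by a logical relations argument, since $t$ may internally use higher-order constructs — $\lambda$-abstractions and applications — whose denotations live in function spaces, so a direct induction on the form of the statement itself would not go through. First I would define a type-indexed family of relations $R_\tau$ between $\cat{Diff}$-morphisms $g : \R \to \sem{\tau}$ and $h : \R \to \sem{\D(\tau)}$. At the base type, $R_\R(g,h)$ holds iff $h(x) = (g(x), g'(x))$ for all $x$; at products, the relation is componentwise; and at a function type $\sigma \to \tau$, $R_{\sigma\to\tau}(g, h)$ holds iff for every related pair $(a,b) \in R_\sigma$, the pointwise applications $\lambda x.\, g(x)(a(x))$ and $\lambda x.\, h(x)(b(x))$ lie in $R_\tau$.

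With this setup I would prove the fundamental lemma: for every typing derivation $\Gamma \vdash t : \tau$ and every assignment of pairwise $R$-related inputs to the variables in $\Gamma$, the pair of morphisms induced by $\sem{t}$ and $\sem{\D(t)}$ lies in $R_\tau$. The proof is by structural induction on the typing derivation: variables are handled by the assumption on the environment, abstraction and application discharge each other via the definition of $R_{\sigma\to\tau}$, products are immediate, and let-bindings reduce to application. Specializing to the judgment $x : \R \vdash t : \R$ and supplying the related pair $(f, \langle f, f'\rangle) \in R_\R$ at the unique context variable yields precisely the claimed equation $(f, f'); \sem{\D(t)} = (f;\sem{t},\, (f;\sem{t})')$.

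The main obstacle will be twofold. First, I must check that $R_{\sigma\to\tau}$ is well-defined on $\cat{Diff}$-morphisms — that the pointwise operations in its clause really produce smooth maps when unfolded through the internal-hom diffeology recalled in Section~\ref{sec:diffspace} — and, relatedly, that the relation is closed under reindexing along smooth base changes so the induction is stable under substitution. Second, each smooth primitive $\phi$ of the source language (e.g.\ $\sin$, multiplication, and the dual-number pairing used by $\D(-)$) must be shown to satisfy the corresponding instance of $R$, namely that the push-forward along $\sem{\D(\phi)}$ agrees with the classical tangent action of $\phi$; this ultimately reduces to the chain rule and product rule from elementary calculus. Neither step requires any distribution-theoretic machinery, since the embedded language lives entirely within the smooth fragment of $\lambda_\delta$.
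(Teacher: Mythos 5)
This lemma is imported verbatim from \cite{huot2020} and the present paper gives no proof of its own; the original result is established by exactly the kind of logical-relations argument you sketch (a categorical gluing/subscone over curves $\R \to \sem{\tau}$, with base-type clause $h(x) = (g(x), g'(x))$, the standard clause at function types, and the primitive cases reduced to the chain and product rules). Your proposal is correct and takes essentially the same approach as the cited proof.
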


Their full correctness theorem takes into account arbitrary open term such that the inputs and outputs are smooth manifolds. For sake of presentation, we focus on this simplified case. Due to their language also being a $\lambda$-calculus that is interpreted $\diffcat$, the identity translation is well-typed and so is their AD translation. Now, we can state the theorem:

\begin{Th}
\label{th:statonsound}
 Let $x : \R \vdash t : \R$ be a well-typed program in the original language. Consider the terms $\cdot \vdash t_1 = \lamb x {\pi_1 (\D(t)(x,1))} : \R \to \R$ and $t_2 = \lamb x {\pi_2 (\D(t)(x,1))} : \R \to \R$, then
 
 \[
    \sem{(\lift{(\lamb x t)}, \frac{\partial}{\partial x}(\lift (\lamb x t)))} = \sem{(\lift (t_1), \lift (t_2))}
 \]
\end{Th}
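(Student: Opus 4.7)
The plan is to reduce everything to a single application of Lemma~\ref{lem:huot} with $f$ taken to be the identity on $\R$, plus the standard distribution-theoretic fact that on smooth functions the distributional derivative agrees with the classical derivative. Write $g = \sem{t}$, viewed as a smooth map $\R \to \R$ (so that $\sem{\lambda x.\, t}$ is also $g$ under the curry/uncurry correspondence for $\diffcat$). Since the source language of \cite{huot2020} is interpreted in $\diffcat$ in exactly the way we interpret its image under the identity embedding, $g$ has the same denotation whether viewed as a term of the smooth calculus or of $\lambda_\delta$.

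First I would compute $\sem{t_1}$ and $\sem{t_2}$. Specializing Lemma~\ref{lem:huot} to $f = \mathrm{id}_\R$ (so $f' = 1$), we obtain
\[
(\mathrm{id},1);\sem{\D(t)} \;=\; (\mathrm{id};\sem{t},\, (\mathrm{id};\sem{t})') \;=\; (g,\, g'),
\]
so $\sem{\D(t)}(x,1) = (g(x), g'(x))$. Applying $\pi_1$ and $\pi_2$ and then abstracting over $x$ gives $\sem{t_1} = g$ and $\sem{t_2} = g'$ as morphisms $\R \to \R$ in $\diffcat$.

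Next I would handle the two components of the pair separately. For the first, $\sem{\lift(\lambda x.\, t)} = \mathbf{T}_g$ by the clause for $\lift$ in Figure~\ref{fig:semantics}, and $\sem{\lift(t_1)} = \mathbf{T}_{\sem{t_1}} = \mathbf{T}_g$, so the first components agree. For the second, unfolding the clause for $\partial/\partial x$ gives $\sem{\tfrac{\partial}{\partial x}(\lift(\lambda x.\, t))} = \partial \mathbf{T}_g / \partial x$, while $\sem{\lift(t_2)} = \mathbf{T}_{g'}$. The remaining content is thus the identity $\partial \mathbf{T}_g/\partial x = \mathbf{T}_{g'}$ for smooth $g$, which is a one-line integration-by-parts calculation: for any test function $\varphi$,
\[
\langle \partial \mathbf{T}_g/\partial x, \varphi \rangle = -\langle \mathbf{T}_g, \partial \varphi/\partial x\rangle = -\int g\,\varphi' \,\dint x = \int g'\,\varphi\,\dint x = \langle \mathbf{T}_{g'}, \varphi\rangle,
\]
where the boundary terms vanish by the compact support of $\varphi$. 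Pairing the two component equalities gives the claimed identity of pairs.

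The only real obstacle is bookkeeping about which category and which semantics is being used on each side of the equation in Lemma~\ref{lem:huot}: we must confirm that composing the pair $(\mathrm{id},1) : \R \to \R \times \R$ with $\sem{\D(t)}$ in $\diffcat$ is literally the morphism computing $\D(t)(x,1)$ pointwise, so that our $\sem{t_1}, \sem{t_2}$ coincide with the projections of $(g,g')$ rather than some reparameterized variant. Once that identification is made, everything else is the two structural unfoldings above plus the classical integration-by-parts lemma, and no further distribution-theoretic machinery is needed.
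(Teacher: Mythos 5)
Your proposal is correct and follows essentially the same route as the paper: the paper's proof is precisely an application of Lemma~\ref{lem:huot} with $f = \lamb x x$ to identify $\sem{t_1} = g$ and $\sem{t_2} = g'$, together with Theorem~\ref{th:smoothdist} (whose integration-by-parts proof you reproduce inline) to conclude $\partial \mathbf{T}_g/\partial x = \mathbf{T}_{g'}$. The only difference is that you spell out the componentwise unfolding and the bookkeeping that the paper leaves implicit.
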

\begin{proof}
 The proof is a straightforward application of Theorem~\ref{th:smoothdist} and Lemma~\ref{lem:huot} with $f = \lamb x x$.
\end{proof}


    

By using the general soundness theorem of \cite{huot2020}, slightly massaging the statement above and modifying the programs $t_1$ and $t_2$ it is possible to prove a similar version of the theorem above for open terms $x_1 : \R, \cdots , x_n :\R \vdash t : \R^m$.

\subsection{Constructive Semantics}
\label{sec:construct}

An important aspect of the semantics presented by \cite{sherman2019} is that their language has datatypes both open and compact subsets of topological spaces, which in their language are type constructors $\texttt{OShape}$ and $\texttt{KShape}$. Due to idiosyncrasies of constructive topology, they also make heavy use of the type $\texttt{OShape}\, E \times \texttt{KShape}\, E$, which they call $\texttt{OKShape}$. In the original papers \cite{sherman2019,sherman2020} they go over the formalities that make their semantics work and be computable. In this section we are only interested in their provided API to program with these spaces which, in particular, makes it possible to compute integrals over compact domains of $\R^n$. This suggests that it should be possible to translate $\lambda_\delta$ into their metalanguage. We define such a translation $\bsem{-}_\delta$, starting with the type translation depicted in Figure~\ref{fig:typetrans}.

\begin{figure}
  \begin{center}
  \begin{align*}
    \bsem{\R}_{\delta} &= \mathfrak{R}\\
    \bsem{\tau_1 \times \tau_2}_{\delta} &= \bsem{\tau_1}_\delta \times \bsem{\tau_2}_\delta\\
    \bsem{\tau_1 \to \tau_2}_{\delta} &= \bsem{\tau_1}_\delta \to \bsem{\tau_2}_\delta\\
    \bsem{\mathsf{Pred}(\R^n)}_{\delta} &= \texttt{OKShape} (\mathfrak{R}^n)\\
    \bsem{\test{\R^n}}_{\delta} &= (\mathfrak{R}^n \to \mathfrak{R}) \times (\texttt{OKShape}\,  \mathfrak{R}^n)\\
    \bsem{\dist{\R^n}}_{\delta} &=  \bsem{\test{\R^n}}_\delta \to \mathfrak{R}
\end{align*}
\end{center}
  \caption{Type translation into $\lambda_s$}
  \label{fig:typetrans}  
\end{figure}

Since their semantics is based on constructive topology, they only have a datatype for the constructive real numbers $\mathfrak{R}$. The type constructors of the simply-typed $\lambda$ calculus are standard. The interesting aspects are the distribution theoretic primitives. We interpret predicates as open sets corresponding to their indicator functions. Test functions are interpreted as pairs of a function a compact set, i.e. its support. Then, as it is standard, distributions are functions from test functions to real numbers. 

\begin{figure}
  \begin{center}
  \begin{align*}
\bsem{\varphi_r^c}_\delta &= (\varphi_r^c,\, \app{\texttt{makecube_n}}{r}\, c)\\
\bsem{\lift \, t}_\delta &= \lamb {(\varphi, K)} \int_K\lamb{x}{(\app{\varphi}{x})(\app{\bsem{t}_\delta}{x})}\\
\bsem{\mathbb{1}_b\, t}_\delta &= \lamb {(\varphi, K)} {\letin {K'}  {( \bsem{b}_\delta \cap K)} {\int_{K'} \, (\lamb x {\bsem{t}_\delta(x) * (\app{\texttt{indicator}}{K'}\, x)\, })}}\\
\bsem{\delta_c}_\delta &= \lamb{(\varphi, K)}{\app{\varphi}{c}}\\
\bsem{\frac{\partial t}{\partial x_i}}_\delta &= \lamb{(\varphi, K)}{- \app{\bsem{t}_\delta}{(\app{(\texttt{derivative}_i}{\varphi}), K)}}
\end{align*}
\end{center}
  \caption{Selected term translations into $\lambda_s$}
  \label{fig:termtrans}  
\end{figure}

We present parts of the term translation in Figure~\ref{fig:termtrans} and, once again, the translation for the lambda calculus syntax is trivial. The interesting aspects are some of the distribution theoretic primitives. The primitive test functions $\varphi_r^c$ are mapped to their mathematical counterparts $\varphi_r^c$, which are simply a multiplication of exponentials, and to the hypercube centered around $c$ and with sides of length $r$, which can be defined using the primitives\footnote{
\url{https://github.com/psg-mit/marshall/blob/master/examples/stoneworks/krep.asd}\\
\url{https://github.com/psg-mit/marshall/blob/master/examples/stoneworks/orep.asd}\\
\url{https://github.com/psg-mit/marshall/blob/master/examples/stoneworks/okrep.asd}}
\begin{align*}
&\texttt{unit_cube } : \texttt{KShape}\,\mathfrak{R}\\    
&\texttt{product} : \texttt{KShape}\, \mathfrak{R^m} \to \texttt{KShape}\, \mathfrak{R^m} \to \texttt{KShape}\, \mathfrak{R^{m + n}}\\
&\texttt{translate} : \mathfrak{R}^n \to \texttt{KShape}\, \mathfrak{R}^n \to \texttt{KShape}\, \mathfrak{R}^n\\
&\texttt{scale} : \mathfrak{R} \to \texttt{KShape}\, \mathfrak{R}^n \to \texttt{KShape}\, \mathfrak{R}^n\\
\end{align*}
These primitives are also available for $\texttt{OShape}$, making it possible to define an $\texttt{OKShape}\, \mathfrak{R}^n$ for $n$-dimensional hypercubes.

The lift primitive uses the iterated integral primitive:
\[
\mathsf{integral_n}: \texttt{OKShape}\, \mathfrak{R}^n \to (\mathfrak{R}^n \to \mathfrak{R}) \to \mathfrak{R} 
\]

This is simply an iterated application of their primitive 
\[\mathsf{integral} : \texttt{OKShape}\, \mathfrak{R} \to (\mathfrak{R} \to \mathfrak{R}) \to \mathfrak{R}
\]
for one dimensional integration over compact support in order to support higher dimensional integration. We use the syntactic sugar $\int \triangleq \texttt{integral}_n$. 

In order to translate the indicator function we use their primitive 
\[\cap : (OKShape\, \mathfrak{R}^n) \to (OShape\, \mathfrak{R}^n) \to (OKShape\, \mathfrak{R}^n)\] 

which computes the intersection of a compact subset with an open subset, resulting in a compact subset. Then, we simply compute the integral over this intersection of the function $f(x)*\app {(\mathsf{indicator}} {(b \cap K)} \, x) $, where the function $\mathsf{indicator} : OShape\, \mathfrak{R}^n \to \mathfrak{R}^n \to \mathfrak{R}$ returns $1$ when $x$ is in the interior of the shape, and $0$ when outside its boundary. Note that for points at the boundary, this function returns an indeterminate value.

Dirac deltas are translated the standard way. In order to translate derivatives we make use of the primitive:
\[
\texttt{derivative}: (\mathfrak{R} \to \mathfrak{R}) \to (\mathfrak{R} \to \mathfrak{R})
\]

which, by making use of partial application of the test function, allows us to define partial derivatives of functions $\mathfrak{R}^n \to \mathfrak{R}$ as the authors show in their paper \cite{sherman2020}, which we denote by the syntactic sugar $\texttt{derivative}_i$. Just an example, a partial derivative of a function $f : \mathfrak{R}^2 \to \mathfrak{R}$ can be defined as $\lamb x {\app{\texttt{derivative}} {(\lamb y {f\, x\, y}})}$.
The other distributive theoretic primitives such as distribution application and the vector space structure use the standard translation.
\begin{Th}
  If $\Gamma \vdash_{\lambda_\delta} t : \tau$ then $\bsem{\Gamma}_\delta \vdash_{\lambda_s} \bsem{t}_\delta : \bsem\tau_\delta$.
\end{Th}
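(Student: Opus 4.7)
The plan is to proceed by structural induction on the typing derivation $\Gamma \vdash_{\lambda_\delta} t : \tau$. For each rule of Figure~\ref{fig:typing} (and the unshown rules of the simply-typed fragment), I would assume the statement for all immediate subderivations and verify that the clause of Figure~\ref{fig:termtrans} produces a $\lambda_s$ term inhabiting the translated type. Because the type translation of Figure~\ref{fig:typetrans} is homomorphic in $\times$ and $\to$, the cases for variables, $\lambda$-abstraction, application, pairing, projection, let-bindings, arithmetic, comparators, $\nat$, and the $\iter$ combinator are routine and can be dispatched in a single paragraph by a direct appeal to the inductive hypothesis.

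The substantive work is in the distribution-theoretic primitives. The uniform strategy is: apply the inductive hypothesis to each premise, unfold $\bsem{\dist{\R^n}}_\delta = \bsem{\test{\R^n}}_\delta \to \mathfrak{R}$ and $\bsem{\test{\R^n}}_\delta = (\mathfrak{R}^n \to \mathfrak{R}) \times \texttt{OKShape}(\mathfrak{R}^n)$, and then check that each $\lambda_s$ primitive invoked in Figure~\ref{fig:termtrans} (namely $\texttt{integral}_n$, $\cap$, $\texttt{indicator}$, $\texttt{derivative}_i$, and $\texttt{makecube}_n$) has the signature advertised in Section~\ref{sec:construct}. For instance, in the $\lift$ rule, the IH gives $\bsem{t}_\delta : \mathfrak{R}^n \to \mathfrak{R}$; for $(\varphi, K) : \bsem{\test{\R^n}}_\delta$ the body $\lamb{x}{(\app \varphi x) * (\app{\bsem{t}_\delta}{x})}$ lies in $\mathfrak{R}^n \to \mathfrak{R}$, so $\texttt{integral}_n\, K$ applied to it returns an $\mathfrak{R}$, and abstraction produces exactly $\bsem{\test{\R^n}}_\delta \to \mathfrak{R}$. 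The $\delta_c$, $\varphi^n(c,r)$, distribution-application, derivative, and vector-space cases are analogous, each reducing to reading off one or two primitive signatures.

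The main obstacle is the indicator case $\mathbb{1}_b(t)$, because the signatures stated in Section~\ref{sec:construct} are $\cap : \texttt{OKShape}\,\mathfrak{R}^n \to \texttt{OShape}\,\mathfrak{R}^n \to \texttt{OKShape}\,\mathfrak{R}^n$ and $\texttt{indicator} : \texttt{OShape}\,\mathfrak{R}^n \to \mathfrak{R}^n \to \mathfrak{R}$, while $\bsem{b}_\delta$ and the ambient $K$ are both $\texttt{OKShape}\,\mathfrak{R}^n$ and the result $K'$ of the intersection is an $\texttt{OKShape}$ as well. To reconcile this I would insert the evident first projection $\pi_o : \texttt{OKShape}\,\mathfrak{R}^n \to \texttt{OShape}\,\mathfrak{R}^n$ coming from the product structure of $\texttt{OKShape}$, form $K' := K \cap \pi_o(\bsem{b}_\delta) : \texttt{OKShape}\,\mathfrak{R}^n$, and then apply $\texttt{indicator}$ to $\pi_o(K')$. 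With this small piece of glue the integrand $\lamb{x}{\bsem{t}_\delta(x) * \app{(\app{\texttt{indicator}}{\pi_o(K')})}{x}}$ has type $\mathfrak{R}^n \to \mathfrak{R}$, $\texttt{integral}_n\,K'$ produces an $\mathfrak{R}$, and the whole abstraction lives in $\bsem{\dist{\R^n}}_\delta$. Once this projection convention is fixed, the remaining cases require no new ideas and the induction closes.
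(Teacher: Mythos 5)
Your proposal follows essentially the same route as the paper: structural induction on the typing derivation, with the simply-typed cases dispatched by the homomorphic type translation and the distribution-theoretic cases checked directly against the signatures of the $\lambda_s$ API primitives. The one place you go beyond the paper is the indicator case, where you correctly observe that the stated signatures of $\cap$ and $\texttt{indicator}$ do not literally match the uses of $\bsem{b}_\delta$ and $K'$ in Figure~\ref{fig:termtrans} (both are $\texttt{OKShape}$ where an $\texttt{OShape}$ is expected), and your insertion of the evident projection $\texttt{OKShape}\,\mathfrak{R}^n \to \texttt{OShape}\,\mathfrak{R}^n$ is exactly the coercion needed to make the paper's one-line argument type-check.
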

\begin{proof}
  The proof follows by induction on the typing derivation $\Gamma \vdash_{\lambda_\delta} t : \tau$, using the type judgements of the API functions defined above. 
\end{proof}

We could also prove a similar theorem here as we did in \ref{th:statonsound}, where the proof would be almost identical, save for the fact that soundness of differentiation is given by construction of the primitive here rather than as a meta-property of a defined macro. 

This illustrates that there are other semantic domains that can soundly give semantics to $\lambda_\delta$. This exercise should not be interpreted as our language not extending the semantics defined by Sherman et al. Instead, we see this as a valuable addition to their semantics, since previous work has shown how ignoring jump discontinuities -- as it is the case in their base semantics -- ignores important physical interactions in your model, as demonstrated in \cite{li2018,bangaru2021}. Furthermore, since this semantics is fully computable and numerically stable, it sidesteps the undesirable property of noncomputable integrals in the $\cat{Diff}$ semantics. 

\section{Implementation}
\label{sec:impl}

We have implemented a proof-of-concept library for the main concepts in $\lambda_\delta$ in PyTorch\footnote{\url{https://github.com/pytorch/pytorch}}. The distributional derivative is defined in terms of regular derivatives, which opens two possibilities for implementation: we either define an automatic differentiation procedure suited to distributions, or we use an out-of-the-box AD procedure and apply it to the test functions. There are advantages and disadvantages to both approaches. By construction, the distributions definable in $\lambda_\delta$ have an easy to characterize normal-form. Most of the syntactic constructions have an easy to define interaction with derivatives --- e.g. the derivative of a lift is the lift of the derivative --- with the exception of the multiplication of a smooth function by a predicate. The feasibility of directly differentiating a predicate is conditional on its complexity. One way around this problem is by only allowing simple predicates that have well-known derivatives. For example, the derivative of the predicate $\lambda x.\ x \leq 0$ is $\delta_0$. Unfortunately when dealing with higher-dimensional objects it becomes harder to compute these derivatives. The advantage of this approach is that it is closer to heart to an important selling point of AD methods: it is possible to share computations between the computation of a function and its derivatives.

While the test function approach does not share this computation sharing property, it has the advantage of being incredibly easy to implement. They are by definition infinitely differentiable and are simple enough to easily apply any existing AD method to implement their differentiation. It is for this reason that we take the test function approach in our library.

Another key aspect of our library is distribution application as it relies on higher-dimensional integration. This feature goes out of the scope of PyTorch's standard library, so we have used torchquad's\footnote{\url{https://github.com/esa/torchquad}} Monte Carlo integration implementation. Because we are using Monte Carlo integration methods, our distribution application is not strictly deterministic and has some variance from application to application. This instability further compounds at higher derivatives, though it can be tamed to a degree by sampling a larger volume of points for the Monte Carlo method. However, because this implementation is meant to be more of a toy implementation of a few key features of our language rather than an optimized compiler, we have not attempted to optimize for further performance. 

Our implementation resolves some of the strange behaviour that arises from differentiating conditionals (such as the maximum function) in PyTorch. For instance, consider the following three functions:

\begin{verbatim}
    def sillyID_1(x):
        return max(0, x) - max(-x, 0)
        
    def sillyID_2(x):
        return max(x, 0) - max(-x, 0)
        
    def sillyID_3(x):
        return max(0, x) - max(0, -x)
\end{verbatim}

These are each a different way of implementing the identity function which appear to be equivalent save for the order of the arguments. However, each of these functions will return a different derivative at the origin when fed into PyTorch's AD implementation. More specifically, the first will return 1, the second will return 0, and the third will return 2. Our implementation resolves this inconsistency modulo the noise from numerical integration. 

Finally, if we were to implement an actual compiler to our language, we would avoid using integration as much as possible, as it is inefficient and susceptible to floating point errors. In an actual optimizing compiler for $\lambda_\delta$ we could rewrite distribution applications $\lrang {\mathsf{lift}\ t,\varphi^n(c, r)}$ as $t\, c$, which would save computation when evaluating away from boundaries of discontinuity. We could apply a similar procedure for differentiation and apply normal AD algorithms far from these boundaries. 

\section{Related Work}
\renewcommand{\arraystretch}{1.5}
\def\checkmark{\tikz\fill[scale=0.4](0,.35) -- (.25,0) -- (1,.7) -- (.25,.15) -- cycle;} 
\newcommand{\cmark}{\ding{51}}
\newcommand{\xmark}{\ding{55}}

\begin{center}
    \begin{tabular}{ | p{1.9cm} | p{2cm} | p{1.8cm} | p{2.0cm} | l |}
    \hline
     & Higher-order Functions & Higher Derivatives & Non-Smooth Conditionals & Theorem~\ref{th:main}\\ \hline
    \cite{abadi2019} & \xmark & \cmark & \cmark(Partiality) & \xmark \\ \hline
    \cite{huot2020} & \cmark & \cmark & \xmark & N/A \\ \hline
    \cite{sherman2020} & \cmark & \cmark & \xmark (Locally\newline Lipschitz) & N/A \\ \hline
    \cite{bangaru2021} & \xmark & \cmark & \cmark & \cmark \\ \hline
    \cite{ehrhard2003} & \cmark & \cmark & \xmark & N/A \\ \hline
    This Work & \cmark & \cmark & \cmark & \cmark \\ \hline

    \end{tabular}
\end{center}


\paragraph{Recursive languages} 

Abadi and Plotkin \cite{abadi2019} define a first-order programming language with a reverse-mode AD construct and while loops. They define the semantics of their language using the fact that the set of infinitely differentiable partial functions $\R^n \rightharpoondown \R^m$ forms a pointed CPO. They define an operational semantics that implements an AD algorithm and prove it adequate with respect to their denotational semantics. Their language does not support higher-order functions and uses non-termination to deal with jump discontinuities. In order to prove the correctness of automatic differentiation everywhere the program terminates they work with partial predicates $p : R^n \rightharpoondown \{ tt, ff\}$ such that both $p^{-1}(\{ tt \})$ and $p^{-1}(\{ ff \})$ are open sets. However, adding non-terminating behavior at points of discontinuity loses the expressive power required to describe models such as those in \ref{sec:examples}.


\paragraph{Semantics for Macro-Based AD}
Huot et al. \cite{huot2020} also use the category $\cat{Diff}$ to give semantics to a differentiable $\lambda$-calculus. They define a global program transformation corresponding to an implementation of forward-mode automatic differentiation which they show corresponds to the semantic differentiation by a logical relations argument. While they use the same semantic model we do, their language lacks the distribution theoretic machinery developed here. As a result, their language lacks a conditional construct beyond pattern matching on tuples, which severely limits the programs expressible. In fact, none of the examples presented in Section~\ref{sec:examples} are expressible in their language. Furthermore, as we have shown in Section~\ref{sec:embed}, our semantics can be seen as a conservative extension of their semantics.

Matthijs Vakar has a long line of work on defining languages where an automatic differentiation operator is sound. This began with his collaboration with Huot and Staton \cite{huot2020}, and has consistently used a smooth model similar to the one described therein with the same limitations on conditional statements. However, in \cite{vakar2020} posted on arXiv (and thus we only tenatively include it given its lack of peer review), he presents a model of a differentiable programming language with a $\textbf{sign}$ construct, which gives rise to non-smooth behavior in programs. He uses an approach similar to \cite{abadi2019} in that he leaves $\textbf{sign}$ a partial function that is undefined at zero, which gives rise to the same expressiveness issues. It is unclear at the moment if there is a sound translation from his language into ours, since our programs are total.
 
\paragraph{Constructive Semantics}
While most of the existing approaches (described in \ref{sec:ifstmt}) cannot handle differentiating directly at points of discontinuity, the semantics defined by Sherman et al. \cite{sherman2020} uses ideas from constructive topology to interpret a differentiable language that admits higher-order functions and locally-Lipschitz functions that may be differentiated. What distinguishes their approach from ours is that to interpret non-smooth programs they use the idea of subgradient, formally expressed by the concept of \emph{Clarke derivatives}. A consequence of using Clarke derivatives is that the higher derivatives of their non-differentiable locally-Lipschitz functions are undefined. 

To interpret the higher-order fragment they use a sheaf-theoretic construction which is similar to the one used for $\cat{Diff}$. They define a tangent bundle functor using Kan extensions, similar to the construction presented by Staton et al. \cite{staton2016} for the Giry monad. A consequence of their semantics is that even though their semantic category has coproducts and therefore, admits pattern matching, the only predicates $\R^n \to 2$ available are the constant ones, because morphisms are continuous and $\R^n$ is connected but $2$ is not. This prohibits expressing discontinuous functions such as the Heaviside function in their language. Furthermore, even though they would satisfy a kind of Theorem~\ref{th:main}, it would be trivially true, since the predicate has to be constant. In short, while their language uses impressive mathematical machinery to constructively express a much larger class of functions than previous differentiable languages, local Lipschitz continuity is not sufficient for many useful conditionals, and Clarke derivatives fail to capture the true behavior of even those locally-Lipschitz continuous functions at higher derivatives. 

In Section~\ref{sec:construct} we have shown how it is possible to embed $\lambda_\delta$ in their language. We also see this as our language subsuming their original semantics, since non distribution theoretic semantics for nondifferentiable programs will inevitably ignore jump discontinuities. 

\paragraph{Differential Linear Logic}
Since the turn of the century, kickstarted by Ehrhard and Regnier \cite{ehrhard2003}, much work has been done in bridging the gap between differentiation and logic. This has led to the discovery of differential linear logic (DiLL), the differential $\lambda$-calculus and the categorical formulation of differentiation. Though much work has been done on these categorical formalisms, it is still not clear how these models could handle certain features that are expected from the programming languages community -- recursion and if-statements being two of those. That being said, some models of the differential $\lambda$-calculus are connected to distribution theory. Kerjean and Tasson \cite{kerjean2018mackey} have defined a model of differential linear logic where the exponential $!A$ is interpreted as the compactly-supported distributions over $A$. Further research is needed to understand if there is a treatment of DiLL that can handle non-compactly-supported distributions.


\paragraph{Distribution Theory in Computer Science}
The idea of using distribution theory to interpret jump discontinuities has also been used in \cite{nilsson2003} to write a Haskell library for functional reactive programming. However, their approach required two restrictive preconditions: the program had to be provided with the locations of the discontinuities, and the number of discontinuities had to be finite. 



Finally, a distribution-theoretic semantics is at the core of the differentiable language TEG \cite{bangaru2021} --- once again showing that ideas from distribution theory are already used by practioners of differentiable programming. They have defined an untyped, first-order language that has both differentials as well as integral primitives. With their distribution theoretic semantics they focus on reasoning equationally about the differentiation of the integral of non-smooth functions. The main drawbacks of their language when compared to $\lambda_\delta$ is that the fact that they are untyped create some restrictions on the programs they can write. Besides, since $\lambda_\delta$ has higher-order functions it provides more expressive primitives to the programmer. It would be interesting future work to extend the TEG language with a type system and higher-order functions so that $\lambda_\delta$ could be seen as its idealized core calculus.
\section{Future work and Conclusion}

We have defined a denotational semantics for $\lambda_\delta$, a higher-order differentiable language extended with  distribution theoretic primitives which provides a solution to the if-statement problem in differentiable programming. Our semantics is the first that validates certain expected if-statements equations. We highlight the fact that there might other interesting models to the calculus presented here. 

For future work we would like to better understand how our semantics could be used to study the solution of linear partial differential equations. Something similar was done by Kerjean \cite{kerjean2018logical} using compactly supported distributions. By adding recursion to $\lambda_\delta$ and defining its denotational semantics we would have syntax to express the solution of differential equations using the fixed-point operator of our language, making the connections to physics simulators even more explicit. Furthermore, we conjecture that by allowing non-terminating behavior it might be easier to get a better categorical understanding of what distributions are, as test functions might simply be smooth functions, instead of compactly supported ones.

Another promising line of work that requires further research is developing a theory of probability inside $\cat{Diff}$. Many modern Bayesian inference engines rely on differentiable programming. Therefore, it is paramount to develop a theory that encompasses both differential and probabilistic primitives.

\bibliographystyle{ACM-Reference-Format}
\bibliography{biblio}
  

\appendix
\section{Distribution theory}
\label{sec:background}

The original motivation of distribution theory was studying the solutions of linear partial differential equations. Many constructions for functions with codomain $\R$ have a distribution theoretic analogue, which is why they are usually referred to as ``generalized functions''. In our case we are interested in the theory of differentiability of distributions, as they allow us to differentiate certain functions with jump discontinuities or other points of non-differentiability. Since distribution theory is not commonly used by researchers in programming languages this section serves as a self-contained introduction to the subject --- see Golse \cite{golse2010} for a more detailed presentation.

\subsection{Definitions}

\begin{Def}
  Let $X$ be a topological space and $f : X \to \R$ be a function. We define the \emph{support of $f$} as $\mathrm{supp} (f) = \overline{\set {x} { f(x) \neq 0}}$, where $\overline{A}$ is the topological closure of a subset $A$.
\end{Def}

Loosely speaking, the support of a real-valued function is the set of points where the function is non-zero. We say that the support is \emph{compact} if it is bounded. Throughout this section $U \subseteq \R^n$ will always be assumed to be an open set.

\begin{Def}
  $C^\infty_{c}(U)$ is the set of compactly supported infinitely differentiable functions $f : U \to \R$, we may also call this set $\mathcal D(U)$. The elements of this set are usually referred to as \emph{test functions}. 
\end{Def}

It is easy to see that test functions are closed under addition and scalar multiplication. The smoothness and compactness requirements are essential in the construction of distributions. Note that the smoothness requirement rules out most compactly supported functions. For our purposes there is a particular class of test functions which are the most useful ones and are easy to construct.

\begin{Ex}
Let $c \in \R$ and $r \in \R^+$. The \emph{bump function} $\varphi_r^c$ is the following:
\[\varphi_r^c = 
\begin{cases}
    e^{-\frac{1}{1 - (\frac{x-c}{r})^2}} & $if $ c - r < x < c + r\\
    0 & $otherwise$
\end{cases}
\]
\end{Ex}
Intuitively speaking, $\varphi_r^c$ is an infinitely differentiable bump of radius $r$ and centered at point $c$. $\varphi_1^0$ is pictured in figure \ref{fig:bump}. Additionally, because bump functions are closed under multiplication, we can easily extend this definition to the multivariate case:

\begin{Ex}
Let $c = (c_1, c_2, \dots, c_n) \in \R^n$ and $r \in \R^+$. The \emph{multivariate bump function} $\varphi_r^c:\R^n \to \R$ is the following:
\[
\varphi_r^c(x_1, x_2, \dots, x_n) = \varphi_r^{c_1}(x_1)\varphi_r^{c_2}(x_2)\dots\varphi_r^{c_n}(x_n)
\]
\end{Ex}

The set of test functions when equipped with function addition and multiplication by a scalar forms a vector space. More specifically, this means that we can normalize the volume of the image of the test function regardless of the radius or degree of differentiation. 


\begin{figure*}
\begin{multicols}{2}
    \includegraphics[width=\linewidth]{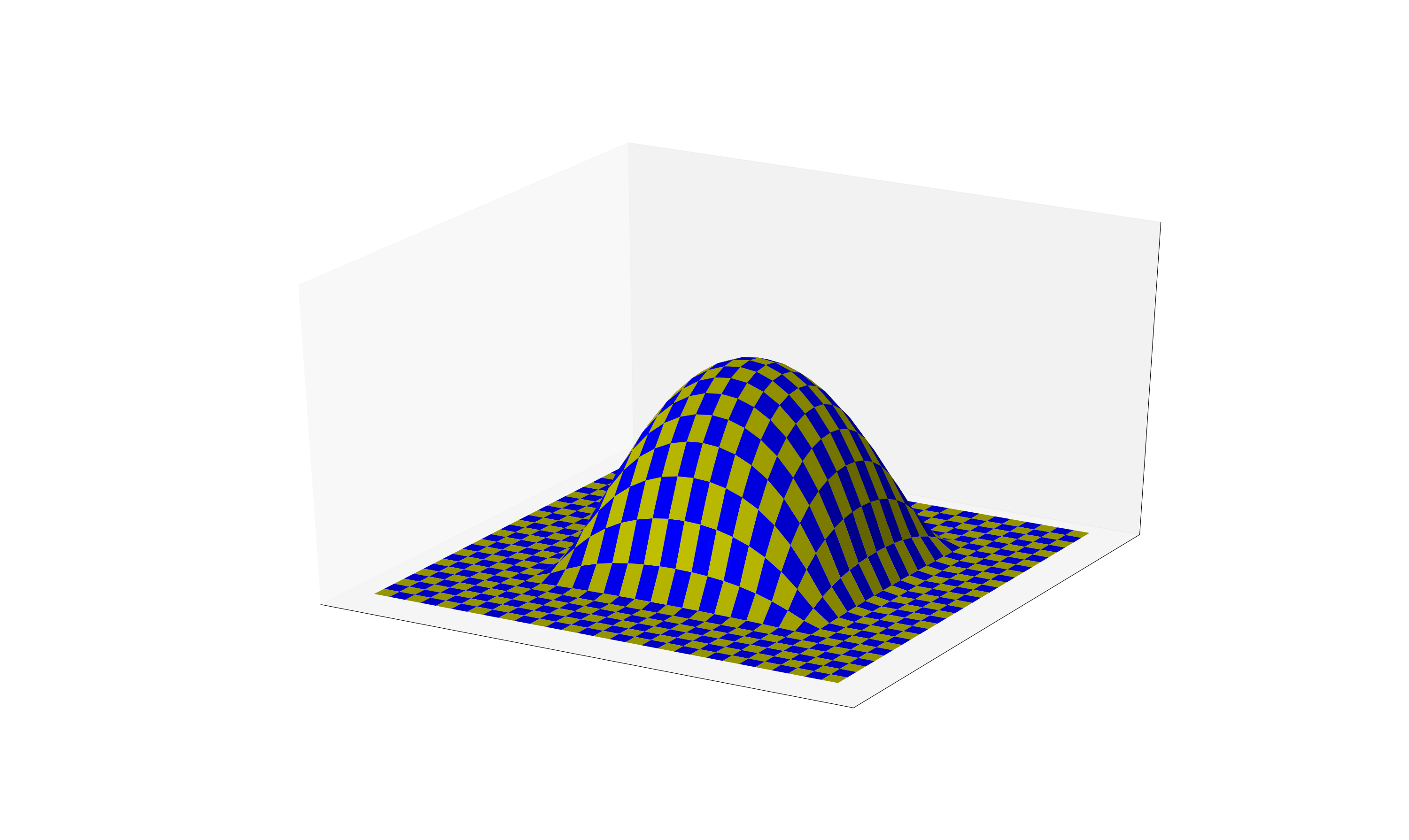}\par 
    \caption{Bump function centered at (0,0)}
    \label{fig:bump}
    \includegraphics[width=\linewidth]{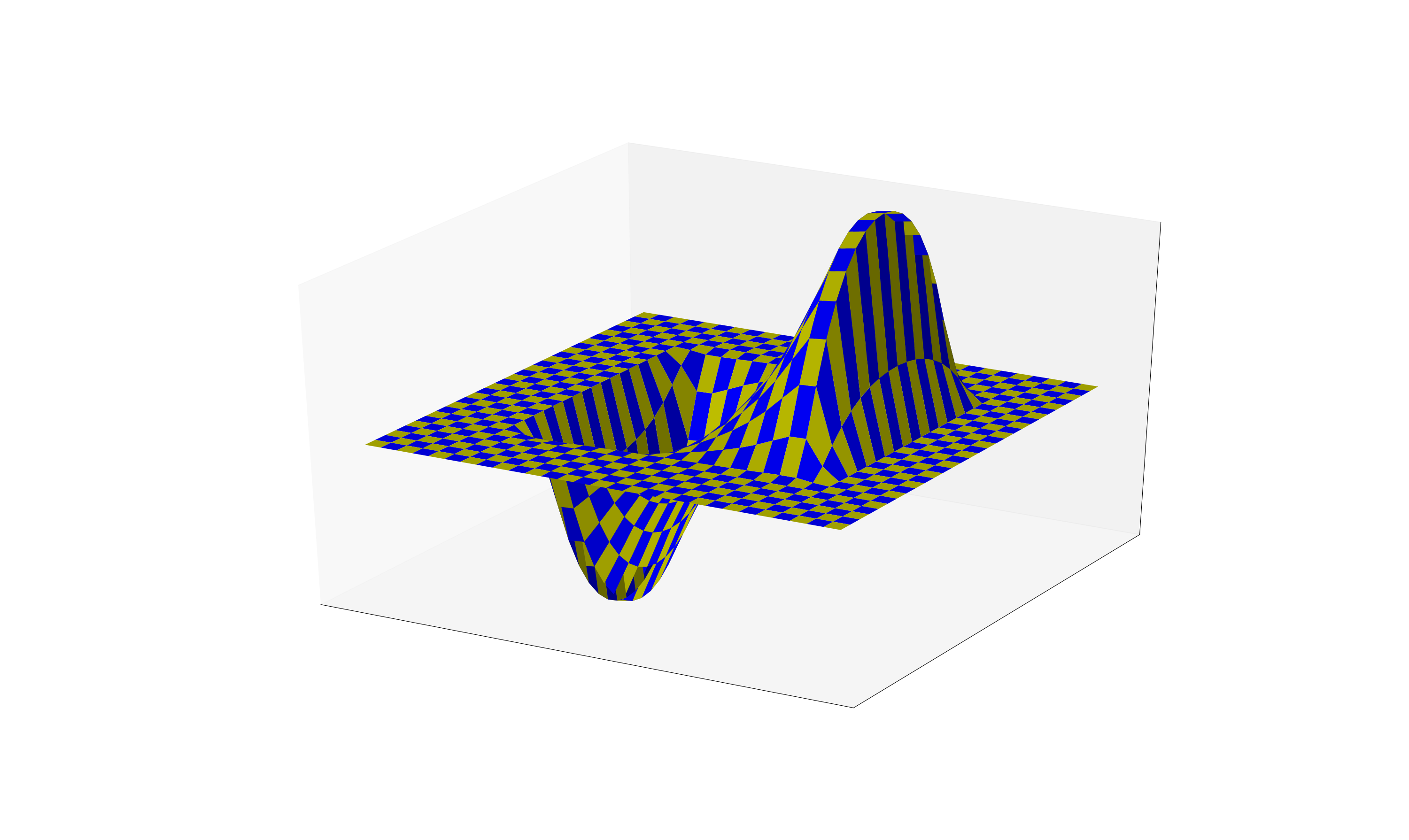}\par 
    \caption{First derivative of a bump function}
    \label{fig:deriv-bump}
    \end{multicols}
\end{figure*}

With these definitions in mind we define distributions as the set of continuous linear functionals $\dist U = \mathcal D(U) \multimap \R$. In the literature it is common to use the letters $T$ and $S$ to represent distributions, and  $\langle T, \varphi \rangle$ to denote the application of a distribution $T$ to a test function $\varphi$. It is important to note that the test functions having compact support is fundamental when defining derivatives of distributions, as illustrated by Theorem~\ref{th:loc}. The set of distributions over an open set forms a vector space, where addition is defined as $\langle T_1 + T_2, \varphi \rangle = \langle T_1, \varphi \rangle + \langle T_2, \varphi \rangle$ and scalar multiplication is defined as $\langle \alpha T, \varphi \rangle = \alpha \langle T, \varphi\rangle$. This construction will be used in Section~\ref{sec:syntax} to encode if-statements.

There are two particular families of distributions that should be mentioned, as they are the most important ones for our semantics. 

\begin{Ex}
  Let $u \in U$, we define the \emph{Dirac delta distribution} $\delta_u : \dist U$ as $\lrang {\delta_u, \varphi} = \varphi(u)$
\end{Ex}

This next class of distributions is why we call them ``generalized functions'', as every sufficiently nice function into $\R$ can be lifted to a distribution:

\begin{Th}[\cite{golse2010}]
  \label{th:lift}
  Let $f : U \to \R$ be a locally integrable function, then $\lrang {T_f, \varphi} = \int_U f(x)\varphi(x) d x$ is a distribution. Note that this definition uses the Lebesgue integral.
\end{Th}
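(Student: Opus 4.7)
The plan is to verify the three defining properties of a distribution for the proposed functional $T_f$: (1) well-definedness on $\mathcal{D}(U)$, (2) linearity, and (3) continuity with respect to the standard inductive-limit topology on $\mathcal{D}(U)$. Linearity is immediate from linearity of the Lebesgue integral, so the content lives in steps (1) and (3).

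First I would check well-definedness. Given $\varphi \in \mathcal{D}(U)$ with compact support $K = \mathrm{supp}(\varphi) \subseteq U$, the product $f\varphi$ vanishes outside $K$, so
\[
\int_U |f(x)\varphi(x)|\dint x \;=\; \int_K |f(x)\varphi(x)|\dint x \;\leq\; \|\varphi\|_\infty \int_K |f(x)|\dint x.
\]
Since $f$ is locally integrable and $K$ is a compact subset of $U$, the quantity $\int_K |f|\,\dint x$ is finite, so the integral defining $\lrang{T_f,\varphi}$ converges absolutely. This establishes (1), and linearity (2) then follows directly from the linearity of the integral in its integrand.

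For continuity, I would use the fact that the topology on $\mathcal{D}(U)$ is the strict inductive limit of the Fr\'echet spaces $\mathcal{D}_K(U)$ of smooth functions supported inside compact $K \subseteq U$, so a linear map $T : \mathcal{D}(U) \to \R$ is continuous iff its restriction to each $\mathcal{D}_K(U)$ is continuous. Equivalently, one needs to exhibit, for each compact $K$, a constant $C_K$ and a seminorm bound of the form $|\lrang{T_f,\varphi}| \leq C_K\, p_{K,N}(\varphi)$ for $\varphi \in \mathcal{D}_K(U)$, where $p_{K,N}(\varphi) = \sum_{|\alpha| \leq N}\|\partial^\alpha \varphi\|_\infty$. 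The estimate from step (1) already gives exactly this with $N = 0$ and $C_K = \int_K |f(x)|\dint x < \infty$, showing that $T_f$ is in fact a distribution of order zero.

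The main (and essentially only) obstacle is knowing what ``continuous'' means here: the topology on $\mathcal{D}(U)$ is the non-trivial LF-topology, and one must invoke its characterization in terms of seminorm estimates on each $\mathcal{D}_K(U)$ rather than trying to work with sequential or pointwise continuity directly. Once that characterization is in hand, the supremum-bound computation above is the whole argument, and as a bonus it identifies $T_f$ as a zeroth-order distribution, which justifies the ``generalized function'' slogan used in the surrounding text.
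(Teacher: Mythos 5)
Your proof is correct and is the standard argument: the paper itself gives no proof of this statement (it is cited to Golse's lecture notes), and your three steps --- absolute convergence from local integrability plus compact support, linearity of the integral, and the seminorm estimate $|\lrang{T_f,\varphi}| \leq \bigl(\int_K |f|\bigr)\,\|\varphi\|_\infty$ on each $\mathcal{D}_K(U)$ verifying LF-continuity --- are exactly the argument the cited reference supplies. The observation that $T_f$ has order zero is a correct and worthwhile addition.
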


\begin{Rem}
  Something peculiar about distributions is that we lose the ability to compute their values at a point, as they must be fed a continuation to output a real number. However, this is not too big a problem as we can use \emph{bump functions} (c.f. Figure~\ref{fig:bump} for an example) centered around a point with support contained in a sphere of radius $\varepsilon$ which approximates to an arbitrary precision the value of a function at a point.
\end{Rem}

\begin{Th}
\label{th:loc}
  Let $f : U \to \R$ be a locally integrable function, $u \in U$ a point where $f$ is continuous and $\psi_\varepsilon$ a family of positive test functions with volume $1$ such that its support is included in the $u$-centered $\varepsilon$-radius sphere, then $\lim_{\varepsilon \to 0} \lrang {T_f, \psi_\varepsilon} = f(u)$. 
\end{Th}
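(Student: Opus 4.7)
The plan is to reduce the statement to a classical mollifier estimate by rewriting the pairing $\langle T_f, \psi_\varepsilon\rangle$ as an integral of $f(x) - f(u)$ against $\psi_\varepsilon$, and then exploiting the continuity of $f$ at $u$.

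First I would unfold the definitions using Theorem~\ref{th:lift}: by construction, $\langle T_f, \psi_\varepsilon \rangle = \int_U f(x)\, \psi_\varepsilon(x)\, dx$. Because $U$ is open and $u \in U$, there is some $\varepsilon_1 > 0$ such that the closed $\varepsilon_1$-ball around $u$ lies inside $U$; for $\varepsilon < \varepsilon_1$ the support of $\psi_\varepsilon$ is contained in $U$, so we can integrate over $\R^n$ without boundary issues. Using the hypothesis that $\psi_\varepsilon$ is positive with total mass $1$, I would write
\[
\langle T_f, \psi_\varepsilon \rangle - f(u) \;=\; \int (f(x) - f(u))\, \psi_\varepsilon(x)\, dx,
\]
since $f(u) = f(u) \cdot \int \psi_\varepsilon(x)\, dx$.

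Next I would invoke continuity of $f$ at $u$: for any $\eta > 0$ there exists $\varepsilon_0 > 0$ such that $\lvert x - u \rvert < \varepsilon_0$ implies $\lvert f(x) - f(u) \rvert < \eta$. For $\varepsilon < \min(\varepsilon_0, \varepsilon_1)$, the integrand is supported in the $\varepsilon$-ball around $u$, where $\lvert f(x) - f(u)\rvert < \eta$. Using positivity of $\psi_\varepsilon$ and the unit-volume hypothesis, I would estimate
\[
\bigl\lvert \langle T_f, \psi_\varepsilon \rangle - f(u) \bigr\rvert \;\leq\; \int \lvert f(x) - f(u)\rvert\, \psi_\varepsilon(x)\, dx \;\leq\; \eta \int \psi_\varepsilon(x)\, dx \;=\; \eta.
\]
Since $\eta$ was arbitrary, this gives $\lim_{\varepsilon \to 0} \langle T_f, \psi_\varepsilon \rangle = f(u)$.

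The only real subtlety is ensuring that the integrals are well-defined and that we may freely split the difference $f(x) - f(u)$. Local integrability of $f$ combined with compact support of $\psi_\varepsilon$ and its boundedness (test functions are smooth on a compact set, hence bounded) ensures all integrals converge absolutely, so the manipulations are justified. I do not expect any genuine obstacle here — the argument is the standard mollifier/approximate-identity computation, tailored to the bump-type test functions used throughout the paper.
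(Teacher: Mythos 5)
Your argument is correct: it is the standard approximate-identity/mollifier computation (subtract $f(u)\int\psi_\varepsilon = f(u)$, localize via the support hypothesis, and apply continuity of $f$ at $u$ together with positivity and unit mass of $\psi_\varepsilon$), and all the integrals are justified exactly as you say. The paper states Theorem~\ref{th:loc} without proof as a classical fact, so your write-up simply supplies the intended standard argument; there is no divergence to report.
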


\subsubsection{Differentiation}
\label{sec:diff}

For every distribution $T \in \dist U$, where $U \subseteq \R^n$ is an open set we can define its partial derivative as the distribution $\frac{\partial T}{\partial x_i}$ such that $\lrang {\frac{\partial T}{\partial x_i}, \varphi} = - \lrang{T, \frac{\partial \varphi}{\partial x_i}}$. As such, distributional derivatives are heavily reliant on the derivatives of bump functions, the first of which is pictured in $\ref{fig:deriv-bump}$. 

We can show that this definition of differentiation extends usual differentiation in the following sense:

\begin{Th}
\label{th:smoothdist}
  For every differentiable function $f$, $\frac{\partial T_f}{\partial x_i} = T_{\frac{\partial f}{\partial x_i}}$.
\end{Th}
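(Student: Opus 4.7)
The plan is to unfold both sides against an arbitrary test function and reduce the equality of distributions to an integration-by-parts identity. Concretely, two distributions $S, T \in \mathcal{D}'(U)$ coincide if and only if $\langle S, \varphi \rangle = \langle T, \varphi \rangle$ for every $\varphi \in \mathcal{D}(U)$, so it suffices to fix an arbitrary $\varphi$ and compare the two pairings.

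First I would expand the left-hand side using the definition of the distributional partial derivative and then the definition of the canonical lift $T_f$: this gives
\[
\left\langle \frac{\partial T_f}{\partial x_i}, \varphi \right\rangle \;=\; -\langle T_f, \tfrac{\partial \varphi}{\partial x_i}\rangle \;=\; -\int_U f(x)\, \tfrac{\partial \varphi}{\partial x_i}(x)\, dx.
\]
On the right-hand side, $\langle T_{\partial f/\partial x_i}, \varphi\rangle = \int_U \tfrac{\partial f}{\partial x_i}(x)\, \varphi(x)\, dx$. So the whole theorem reduces to the classical identity $\int_U \tfrac{\partial f}{\partial x_i}\,\varphi\, dx = -\int_U f\,\tfrac{\partial \varphi}{\partial x_i}\, dx$.

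Next I would establish this identity by integration by parts along the $x_i$ coordinate. Using Fubini, I would slice $U$ into one-dimensional sections, apply the one-variable integration-by-parts formula on each section, and then re-integrate over the remaining coordinates. The boundary term at the endpoints of each slice vanishes because $\varphi$ has compact support inside the open set $U$: outside a compact $K \subseteq U$ both $\varphi$ and $\partial\varphi/\partial x_i$ are identically zero, so there is an $\varepsilon$-neighborhood of $\partial U$ (and of infinity, if relevant) on which the integrand is $0$. This is precisely where the compact-support requirement built into $\mathcal{D}(U)$ is essential.

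The main obstacle is handling the integration by parts cleanly in the multidimensional case with a general open $U$, since $U$ need not have a smooth boundary. The cleanest route is to avoid any boundary analysis of $U$ altogether by exploiting compact support: extend $f\varphi$ and $f\tfrac{\partial \varphi}{\partial x_i}$ by zero to all of $\mathbb{R}^n$, perform the integration by parts on a large enough box containing $\mathrm{supp}(\varphi)$, and observe that the resulting boundary surface integral is identically zero because $\varphi$ vanishes in a neighborhood of that box's boundary. A mild technical point is that the statement only assumes $f$ is differentiable, so to make the standard integration-by-parts rigorous one should either assume $f \in C^1$ (which is how this lemma is normally applied in the paper, since $f$ arises as a smooth program) or invoke an approximation argument; I would state the $C^1$ case and note that this suffices for all uses of the theorem in $\lambda_\delta$.
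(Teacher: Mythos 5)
Your proposal is correct and follows exactly the same route as the paper, whose proof is just the one-line remark that the result ``follows by using integration by parts and the fact that the test functions have compact support''; you have simply filled in the details of that sketch. Your observation that the hypothesis should really be $f \in C^1$ (or handled by approximation) rather than merely ``differentiable'' is a reasonable and worthwhile refinement of the stated theorem.
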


\begin{proof}
  The proof follows by using integration by parts and the fact that the test functions have compact support.
\end{proof}

This definition allows us to understand how distribution theory can be used to interpret the derivative of if-statements. Consider the following example:

\begin{Ex}
  Let $H(x) =  \mathsf{if }\, x < 0\, \mathsf{ then \,  } 0 \, \mathsf{ else }\, 1$ be the \emph{Heaviside function}. One can easily show using the definition above that $\frac{\mathrm d T_H}{\mathrm d x} = \delta_0$.
\end{Ex}

\begin{Ex}
  The function 
  $$f(x) = \mathsf{if }\, x < 0 \, \mathsf{ then  } \, 0 \, \mathsf{ else }\, x$$
  is the ReLU function frequently used as the activation function in neural networks. This function is differentiable almost everywhere except at $x = 0$. However, we can show that its distributional derivative is the Heaviside function defined above. Additionally, because bump functions are symmetric over their centered point, it is easy to see that the distributional derivative centered at $0$ of the ReLU function is $\frac{1}{2}$ regardless of the radius of the region tested. 
\end{Ex}

It is also important to note that, when dealing with nice piecewise continuous functions, the distributional derivative is somewhat "stable" with regards to the non-distributional derivative. For example, in piecewise continuous functions $f: \R \to \R$, the symbolic distributional derivative of the distribution $f$ is nearly identical to the symbolic piecewise derivative of the function $f$ save for Dirac delta distributions being injected at the points of discontinuity scaled by the magnitude of the discontinuity.

\begin{Th}[\cite{golse2010}]
  Let $f : \R \to \R$ be a piecewise differentiable function with discontinuities:
  $$
  a_1 < a_2 < \cdots < a_n
  $$

  The distributional derivative of $f$ is given by

  $$
  f' = f'|_{\R \backslash\{ a_1, \cdots, a_n\} } + \sum_{k = 1}^n \left (\lim_{x \to a_k^+} f(a_k) - \lim_{x \to a_k^-} f(a_k) f(a_k) \right) \delta_{a_k},
  $$
\end{Th}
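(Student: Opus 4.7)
The plan is to compute $\langle (T_f)', \varphi \rangle$ directly from the definition of the distributional derivative and match it against the right-hand side, paired with the same test function $\varphi$. I would start by writing
\[
\langle (T_f)', \varphi \rangle \;=\; -\langle T_f, \varphi' \rangle \;=\; -\int_{\R} f(x)\, \varphi'(x)\, dx,
\]
then split the real line at the discontinuities. Setting $a_0 = -\infty$ and $a_{n+1} = +\infty$, the integral becomes a finite sum $-\sum_{k=0}^{n} \int_{a_k}^{a_{k+1}} f(x)\, \varphi'(x)\, dx$, where finiteness is guaranteed by $\varphi$ having compact support (so only finitely many subintervals intersect $\mathrm{supp}(\varphi)$).

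Next, on each open subinterval $(a_k, a_{k+1})$ the function $f$ is differentiable, so I would apply integration by parts to get
\[
-\int_{a_k}^{a_{k+1}} f(x)\, \varphi'(x)\, dx \;=\; -\Bigl[f(x)\varphi(x)\Bigr]_{a_k^+}^{a_{k+1}^-} \;+\; \int_{a_k}^{a_{k+1}} f'(x)\, \varphi(x)\, dx.
\]
Summing over $k$, the interval integrals reassemble into $\int_{\R} f'|_{\R \setminus \{a_1,\ldots,a_n\}}(x)\, \varphi(x)\, dx = \langle T_{f'|_{\R \setminus \{a_1,\ldots,a_n\}}}, \varphi \rangle$, which accounts for the first term of the claimed formula.

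The remaining boundary contributions are the key step. Each interior endpoint $a_k$ appears twice: as the right endpoint of $(a_{k-1}, a_k)$, contributing $-\lim_{x \to a_k^-} f(x)\,\varphi(a_k) \cdot (-1) = \lim_{x \to a_k^-} f(x)\,\varphi(a_k)$ after the outer sign, and as the left endpoint of $(a_k, a_{k+1})$, contributing $-(-\lim_{x \to a_k^+} f(x))\varphi(a_k) = \lim_{x \to a_k^+} f(x)\,\varphi(a_k)$; taking signs carefully one obtains the net contribution $\bigl(\lim_{x \to a_k^+} f(x) - \lim_{x \to a_k^-} f(x)\bigr)\varphi(a_k)$, which is exactly $\langle \bigl(\lim_{x \to a_k^+} f(x) - \lim_{x \to a_k^-} f(x)\bigr)\delta_{a_k}, \varphi \rangle$. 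The boundary terms at $\pm\infty$ vanish because $\varphi$ is compactly supported.

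Combining the two pieces yields the asserted equality, valid for every $\varphi \in \mathcal D(\R)$, hence for the distributions themselves. The main obstacle I anticipate is purely bookkeeping: keeping the signs of the four boundary contributions at each $a_k$ straight, and making sure that the integration-by-parts step is justified (which follows from $f$ being differentiable on each open subinterval with one-sided limits existing at the endpoints, so that the boundary values in the fundamental theorem of calculus are interpreted as these one-sided limits rather than the value $f(a_k)$ itself).
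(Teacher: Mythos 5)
Your argument is correct and is precisely the standard integration-by-parts proof that the paper defers to its citation of Golse for (the paper gives no proof of its own): apply the definition $\langle T_f',\varphi\rangle = -\langle T_f,\varphi'\rangle$, split at the discontinuities, integrate by parts on each subinterval, and collect the telescoping boundary terms into jump-weighted Dirac deltas. Your sign bookkeeping at each $a_k$ and the use of compact support to kill the terms at $\pm\infty$ are both right; note only that the displayed formula in the statement contains a typographical error (a spurious extra factor of $f(a_k)$ in the jump coefficient), and the jump $\lim_{x\to a_k^+}f(x)-\lim_{x\to a_k^-}f(x)$ you derive is the intended one.
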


When dealing with functions with a higher-dimensional domain the situation is a bit subtler but, under certain conditions, it can still be done analytically. 

\section{Typing rules}

\begin{figure*}
  \centering

\begin{tabular}{c}
  \begin{mathpar}

    \inferrule[Variable]{(x, \tau) \in \Gamma}{\Gamma \vdash x : \tau}
    
    \inferrule[Unpair]{\Gamma \vdash t_1 : \tau_1 \times \tau_2 \\ \Gamma, x:\tau_1, y:\tau_2 \vdash t_2 : \tau_3}{\Gamma \vdash \lettext (x, y) = t_1 \text{ in } t_2 : \tau_3}

\and

  \inferrule[Pair]{\Gamma \vdash t_1 : \tau_1 \\ \Gamma \vdash t_2 : \tau_2}{\Gamma \vdash (t_1, t_2) : \tau_1 \times \tau_2}

  \and



\inferrule[$\lambda$-Abstraction]{\Gamma, x:\tau_1 \vdash t:\tau_2}{\Gamma \vdash \lambda x:\tau_1.t : \tau_1 \xrightarrow{} \tau_2}

\and

\inferrule[Application]{\Gamma \vdash t_1 : \tau_1 \xrightarrow{} \tau_2 \\ \Gamma \vdash t_2 : \tau_1}{\Gamma \vdash t_1\ t_2 : \tau_2}

\and

\inferrule[Lift]{\Gamma \vdash t : \R^n \xrightarrow{} \R}{\Gamma \vdash \lift (t) : \dist {\R^n}}

\and

\inferrule[Indicator Function]{\Gamma \vdash t_1 : \mathsf{Pred}(\R^n) \\ \Gamma \vdash t_2 : \R^n \to \R }{\Gamma \vdash \mathbb{1}_{t_1}(t_2) : \dist{\R^n}}

\and

\inferrule[Differentiation]{\Gamma \vdash t : \dist {\R^n} \\ i \in \{1, ..., n\}}{\Gamma \vdash \frac{\partial t}{\partial x_i}:\dist {\R^n}}

\and

\inferrule [Iteration] {\Gamma \vdash t_0 : \tau \\ \Gamma \vdash t: \tau \to \tau}{\Gamma \vdash \iter t_0\ t : \nat \to \tau}

\and

\inferrule[Distribution Addition]{\Gamma \vdash t_1 : \dist {\R^n} \\ \Gamma \vdash t_2 : \dist {\R^n}}{\Gamma \vdash t_1 +. t_2 : \dist {\R^n}}

\and

\inferrule[Scalar Multiplication]{\Gamma \vdash \alpha : \R \\ \Gamma \vdash t_2 : \dist {\R^n}}{\Gamma \vdash \alpha t_2 :\dist {\R^n}}

\and

\inferrule[Distribution Application]{\Gamma \vdash t_1 : \dist {\R^n} \\ \Gamma \vdash t_2 : \test{\R^n}}{\Gamma \vdash \langle t_1, t_2 \rangle :\R}

\and

\inferrule[Bump Function]{\Gamma \vdash c : \R^n \\ \Gamma \vdash r : \R^+ \\ n \in \nat}{\Gamma \vdash \varphi^n(c, r):\test{\R^n}}

\and

\inferrule[Dirac delta]{\Gamma \vdash t : \R^n }{\Gamma \vdash \delta_t: \dist {\R^n}}

\end{mathpar}
\end{tabular}

\caption{Typing rules}
\label{fig:typing}
\end{figure*}







        




        












  
\end{document}